%
\documentclass[11pt]{article}

\usepackage[a4paper,margin=0.80in,footskip=0.25in]{geometry}

\usepackage{makeidx}  
\usepackage{cite}
\usepackage{amsmath,amssymb}
\usepackage{amsthm}
\usepackage{kbordermatrix}
\usepackage{framed}
\usepackage{array}
\usepackage{graphicx} 
\usepackage{float}
\usepackage{url}
\usepackage{multirow}  
\usepackage{color}    
\usepackage{subfig}
\usepackage{hyperref}
\usepackage{enumitem}
\usepackage{blkarray}
\usepackage{algorithm}
\usepackage[noend]{algpseudocode}
\usepackage{etoolbox}
\let\bbordermatrix\bordermatrix
\patchcmd{\bbordermatrix}{8.75}{4.75}{}{}
\patchcmd{\bbordermatrix}{\left(}{\left[}{}{}
\patchcmd{\bbordermatrix}{\right)}{\right]}{}{}

\setlistdepth{9}

\setlist[itemize,1]{label=$\bullet$}
\setlist[itemize,2]{label=$-$}
\setlist[itemize,3]{label=$*$}
\setlist[itemize,4]{label=$+$}
\setlist[itemize,5]{label=$.$}
\setlist[itemize,6]{label=$\bullet$}
\setlist[itemize,7]{label=$\bullet$}
\setlist[itemize,8]{label=$\bullet$}
\setlist[itemize,9]{label=$\bullet$}

\renewlist{itemize}{itemize}{9}

\newcounter{linecounter}
\newcommand{\linenumbering}{\ifthenelse{\value{linecounter}<10}
{(0\arabic{linecounter})}{(\arabic{linecounter})}}

\renewcommand{\thelinecounter}{\ifnum \value{linecounter} >
9\else 0\fi \arabic{linecounter}}

\newcommand{\toto}{xxx}

\newif\ifshowcomments
\showcommentstrue

\ifshowcomments
\newcommand{\mynote}[2]{\fbox{\bfseries\sffamily\scriptsize{#1}}
 {\small$\blacktriangleright$\textsf{\emph{#2}}$\blacktriangleleft$}}
\else
\newcommand{\mynote}[2]{}
\fi

 \makeatletter
\newtheorem*{rep@theorem}{\rep@title}
\newcommand{\newreptheorem}[2]{%
\newenvironment{rep#1}[1]{%
 \def\rep@title{#2 \ref{##1}}%
 \begin{rep@theorem}}%
 {\end{rep@theorem}}}
\makeatother

 \newtheorem{defin}{Definition}
 \newtheorem{lemma}{Lemma}
\newtheorem{theorem}{Theorem}

\newtheorem{corollary}{Corollary}
\newreptheorem{lemma}{Lemma}
\newreptheorem{theorem}{Theorem}

\begin{document}

\title{Investigating the Cost of Anonymity on Dynamic Networks \footnote{A shorter version of this manuscript has appeared as brief announcement at PODC 2015.}}

\author{ 
  Giuseppe Antonio {\sc Di Luna} and Roberto {\sc Baldoni} \\
 Dipartimento di Ingegneria Informatica, Automatica e Gestionale Antonio Ruberti\\    Universit\`a degli Studi di Roma La Sapienza\\Via Ariosto, 25, I-00185 Roma, Italy\\
  \texttt{\{baldoni,  diluna\}$@$dis.uniroma1.it}}

\date{}
\maketitle
\thispagestyle{empty}

\begin{abstract}
In this paper we study the difficulty of counting nodes in a synchronous dynamic network where nodes share the same identifier, they communicate by using a broadcast with unlimited bandwidth and, at each synchronous round,    network topology may change. To count in such setting, it has been shown that the presence of a leader is necessary. 
We focus on a particularly interesting subset of dynamic  networks, namely  \textit{Persistent Distance} - ${\cal G}($PD$)_{h}$, in which each node has a fixed distance from the leader across rounds and such distance is at most $h$. In these networks the dynamic diameter $D$ is at most $2h$. 
We prove the number of rounds for counting in ${\cal G}($PD$)_{2}$ is at least logarithmic with respect to the network size $|V|$. Thanks to this result, we show that counting on any dynamic anonymous network with $D$ constant w.r.t. $|V|$  takes at least $D+ \Omega(\text{log}\, |V| )$ rounds where $\Omega(\text{log}\, |V|)$ represents the additional cost to be payed for handling anonymity.  At the best of our knowledge this is the fist non trivial, i.e. different from $\Omega(D)$, lower bounds on counting in anonymous interval connected networks with broadcast and unlimited bandwith.

\end{abstract}

\pagenumbering{arabic}

\newpage
\section{Introduction}


Computing over a dynamic distributed system has become mainstream in the recent years, 
this has been due to advent of peer-to-peer systems, the capillary distribution  of mobile devices and growing impact of sensors networks. Such technologies force the designer of distributed systems to think about the dynamicity as a fundamental and persistent  property of the system that is present during the entire system lifetime. 


As in \cite{surveysantoro,kempe,dell2007,LKM06,1806760}, in this paper we consider  systems where processes are stable (i.e., the set $|V|$)  while there is  an adversary that continuously changes the underlying communication graph connecting such processes.  The adversary can be either fair or worst-case. A fair adversary creates or removes edges from the communication graph  following a strategy that does not aim to violate the correctness of the distributed algorithm (e.g., random strategy). This is the typical dynamic behavior exhibited by a peer to peer system \cite{kempe,LKM06}. 
A \textit{worst-case adversary} continuously changes the underline communication graph, having access to nodes' local variables, in order to deploy at \textit{each computational round} the worst possible network topology to contrast the correctness of the algorithm.
The adversary is constrained to maintain in each round a connected topology, i.e. $1$-interval connectivity property \cite{1806760}.

\textit{We investigate the difficulty of counting the network size under a worst-case adversary, where anonymous nodes communicate any amount of data among each other by using an anonymous broadcast}. Solving the counting problem is of primary importance since it may be seen as a basic building block necessary, for example, to compute generic aggregated function. It has been shown in \cite{MichailCS12} that  counting cannot be solved without a leader and several counting algorithms have been provided in the last couple of years that minimize the knowledge about the adversary that nodes need in order to count (e.g., \cite{ICDCS}). However, at the best of our knowledge there is no study that addresses the actual cost of anonymity in terms of time necessary to output the number of processes that form the computation and no trivial lower bound, i.e. different from $\Omega(D)$, is known for such environment.

To study this cost, we first identify an interesting family of subsets of dynamic graphs,  namely ${\cal G}($PD$)_{h}$ graphs. In such graphs  each node has a persistent distance from the leader across rounds and such distance is at most $h$. 
 It is easy to see that graphs in ${\cal G}($PD$)_{1}$ are actually star graphs with the leader at the center and that the adversary cannot change any of such graphs without compromising the connectivity of the graph itself.  In such a scenario the leader is able to output the exact count in one round independently of the anonymity of the processes.

 Let us now consider ${\cal G}($PD$)_{2}$, processes at distance 2 are connected to the leader by an unknown number of paths that  dynamically changes at each round. If nodes are  anonymous, ambiguity is created among these multiple dynamic paths. To overcome such ambiguity,  the leader will need more than  2 rounds \footnote{Solutions exploiting randomness (i.e. tossing coins hoping for different outcomes) are not viable, since we  assume the source of randomness available to processes is governed by the worst case adversary.}. For such family of graphs we show a lower bound, $\Omega(\log\,|V|)$, on the number of rounds necessary to solve the counting problem. This result is proved by showing that counting in a dynamic labeled multigraph where nodes are directly connected to the leader is at most as difficult as counting in ${\cal G}($PD$)_2$. Then we prove the bound on such dynamic multigraphs by using linear algebra techniques.

The rest of the paper is structured as follows: Section 2 presents the related work; Section 3 defines the system model; Section \ref{lowerboundcount} shows the bound on ${\cal G}($PD$)_2$.
Section 5 concludes the paper.

\section{Related Work}

The question concerning what can be computed  on top of static anonymous networks, has been pioneered by Angluin in \cite{An80} and  by Yamashita and Kameda \cite{Yamashita:1988:CAN:62546.62568}. In the domain of non-anonymous dynamic networks the counting problem has been addressed in the following  contexts: (i) dynamicity governed by node churn in the context of distributed query execution \cite{BGGM07,BaldoniBCQ12}, (ii) dynamicity governed by random adversary in the context of peer-to-peer networks  \cite{LKM06} and (iii) dynamicity governed worst-case adversary in the context of $1$-interval connectivity \cite{dell2007,1806760}.




\smallskip

\noindent
 {\bf Counting in anonymous dynamic networks}: \color{black}
  In \cite{kempe}, the authors propose a gossip-based protocol to compute aggregation function in a dynamic network by exploiting an invariant, called \emph{conservation of mass}, defined over the whole set of processes. The network graph considered by \cite{kempe} is  governed by a fair random adversary. The first work investigating the problem of counting in an anonymous network with worst-case adversary is  \cite{MichailCS12}.  
The authors provided an algorithm that, under the assumption of a fixed upper bound on the maximum node degree, it computes an upper bound on the size of the network.   Building on this result, \cite{DBBC14} proposes an exact counting algorithm under the same assumption.  Finally,  \cite{ICDCS}  provides  a counting algorithm for 1-interval connected networks considering each node is equipped with a local degree detector, i.e. an oracle able to predict the degree of the node in each graph generated by the adversary. Both \cite{DBBC14} and \cite{ICDCS} terminate in an exponential number of rounds. 
  
\smallskip

\noindent
{\bf Bounds on adversarial-based dynamic networks}: 
 A fundamental problem that is correlated with counting in dynamic networks with IDs is the $k$ tokens dissemination, defined as follows \cite{1806760}:  each token is initially owned by a node belonging to $V$, then processes exchange tokens, the $k$ tokens dissemination terminates when all $k$ tokens have been received by each node in $V$. \cite{1806760} proved that when each node may send only one token at each round any $k$ token dissemination algorithm based on token forwarding  \footnote{Token-forwarding algorithms are not allowed to combine, split, or change tokens in any way \cite{disc12}.}  terminates in $\Omega(|V| \text{log}\, k)$ round.  
 In \cite{soda13}, the authors improved the bound to $\Omega (\frac{ |V| k}{\text{log}\,|V|})$.  Starting from  these results,  \cite{disc12} provides   bounds for different adversarial-based dynamic networks. It is well known that in network with IDs  $n$ (all-to-all) token dissemination solves counting \cite{opodis13}.  In the same paper, it is introduced a connection between two-party token dissemination, a variant of $k$-token dissemination, and the counting problem. The authors show a lower bound for the two-party problem is also a lower bound for counting. In anonymous dynamic networks considered in  \cite{opodis13}, $k$-token dissemination can be solved by a trivial flooding algorithm in ${\cal O}(D)$ rounds.
Finally,  \cite{directeddataaggregation} shows that in directed static network with IDs and limited  bandwidth, the number of rounds needed to solve counting is function of the network size even when $D=2$.


\section{Model of the computation}

We consider a synchronous distributed system composed by a finite static set of processes $V$ (also called \emph{nodes}). Nodes in $V$ are \emph{anonymous}, i.e., they initially have no identifiers and execute a deterministic \emph{round-based} computation. Processes communicate through a communication network which is  \emph{dynamic}. We assume at each round $r$ the network is stable and represented by a graph $G_r=(V,E(r))$ where $V$ is the set of nodes and $E(r)$ is the set of bidirectional links at round $r$ connecting processes in $V$. 

\begin{defin}\label{def-DG} 
A \emph{dynamic graph} $G=\{ G_0,G_1,\ldots, G_r,\ldots \}$  is an infinite sequence of graphs one at each round $r$ of the computation. 
\end{defin}

 The neighborhood of a node $v$ at round $r$ is denoted by $N(v,r)=\{v':\{v',v\}\in E(r)\}$. Given a round $r$ we denote with $p_{v,v'}$ a path on $G_r$ between $v$ and $v'$. Moreover we denote with $P(r)_{v',v}$, the set of all paths between $v,v'$ on graph $G_r$. The distance $d_{r}(v',v)$ is the minimum length among the lengths of the paths in $P(r)_{v',v}$, the length of the path is defined as the number of edges.

 Computation proceeds by  rounds. Every round is divided in a \emph{send phase} where processes send  the messages for the current round and a \emph{receive phase} where nodes process received messages and prepare those that will be sent in the next round.
Processes communicate with its neighbors through an \emph{anonymous broadcast} primitive: a message $m$ sent by node $v_i$ in the send phase of round $r$ will be delivered to all its neighbors during the receive phase of $r$. Let us remark that in such model a node does not know the value $|N(v,r)|$ before the receive phase of round $r$.
 
   Let us define the dynamic diameter $D$ of dynamic graph $G$. A node $v$ floods message $m$ by broadcasting it at each round, each process receiving a flooded message $m$ starts, in its turn, a flooding of $m$. The flood of $m$ terminates when it has been received by all nodes. A network has a dynamic diameter $D$ if for any $v$ and for any round $r$ the flood of a message that starts at round $r$ from node $v$ terminates at most by round $r+D$. Intuitively the dynamic diameter is the maximum time needed to disseminate messages to all nodes in the network.
  In this work we consider networks where $D$ is fixed and constant with respect to $|V|$.

\paragraph{Leader-based computation and worst case adversary} 
We assume the selection of a topology  graph at round $r$ is done by an omniscient adversary that may choose at each step the worst configuration to challenge a counting algorithm. Due to the impossibility result shown in \cite{MichailCS12}, we assume any counting algorithm that works over a dynamic graph has a leader node $v_l$  starting with a different unique state w.r.t. all the other nodes. 

 \begin{defin}\label{def-Counting}
 Given a dynamic network $G$ with $|V|$ processes, a distributed algorithm ${\cal A}$ solves the counting on $G$ if it exists a round $r$ at which the leader outputs $|V|$ and terminates. 
 \end{defin}

 \vspace{-0.3cm}

\paragraph{Persistent distance dynamic graphs} 
Let us characterize dynamic graphs according to the distances among a  node $v$ and the leader $v_l$.

\begin{defin}\label{def-safe} (Persistent Distance over $G$)
Let us consider a dynamic graph $G$.
The persistent distance between $v$ and $v_l$ over $G$, denoted $D(v,v_l)=d$, is defined as follow:
 $D(v,v_l)=d$ iff $\forall r, d_{r}(v,v_l)=d$.
\end{defin}


Let us now introduce a set of dynamic graphs based on the distance between the leader and the nodes of a graph.
 
\begin{defin}\label{def-PD} (Persistent Distance set)
A graph $G$ belongs to  Persistent Distance set, denoted ${\cal G}($PD$)$ , iff   $\forall v \in G, \,\exists d \in \mathbb{N}^{+}   :: $D$(v,v_l)=d$
\end{defin}


\paragraph{Graphs in ${\cal G}($PD$)_2$ }
Among the dynamic graphs belonging to ${\cal G}($PD$)$ we can further consider the set of graphs, denoted ${\cal G}$(PD$)_h$,  whose nodes have maximum distance $h$ from the leader with $1<h \leq |V|$. Thus, given a graph in ${\cal G}$(PD$)_h$  we can partition its nodes in $h$ sets, $\{V_0,V_1,\ldots,V_h\}$, according to their distance from the leader. The focus of this paper is on dynamic graphs belonging to ${\cal G}($PD$)_{2}$.  As an example, figure \ref{figure:gpd2} depicts a graph belonging to  ${\cal G}($PD$)_2$ at round $0, 1$ and $2$ whose dynamic diameter is $D=4$. If node $v_0$ starts a flood at round $0$, this flood will indeed reach node $v_3$ at round $3$. The task of the leader node $v_l$ is to count nodes in $V_2$.

\begin{figure}[H] 
\begin{center}
\includegraphics[scale=0.8]{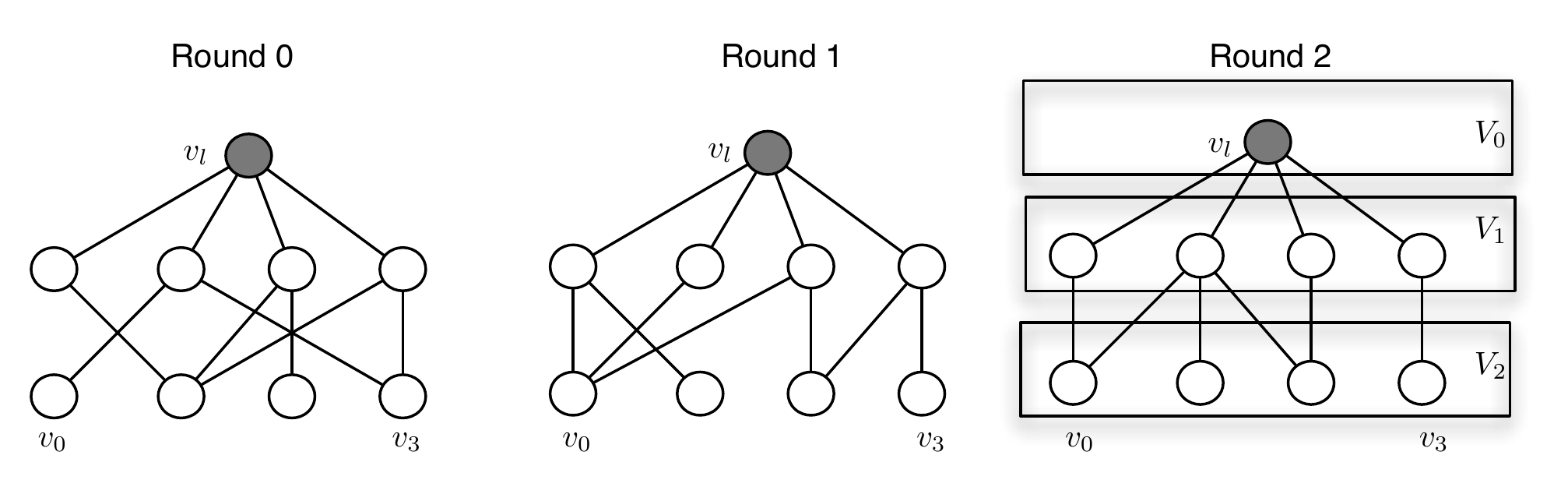}

\caption{An example of a graph belonging to ${\cal G}($PD$)_2$ along three rounds \label{figure:gpd2} }

\end{center}

\end{figure}

\section{Lower bound for ${\cal G}$(PD$)_2$ \label{lowerboundcount}}
In this section we consider the ${\cal G}$(PD$)_2$ set and compute a lower bound for counting time. This is done by introducing a \emph{Dynamic Bipartite Labeled $k$-Multigraphs} (${\cal M}($DBL$_{k})$),  by showing that counting on ${\cal G}($PD$)_2$ requires at least the same number of rounds as counting over  ${\cal M}$(DBL$_k)$ and by finally showing a lower bound on the number of rounds needed to count over ${\cal M}($DBL$_{k})$.


\subsection{Counting in Dynamic Bipartite Labeled $k$-Multigraphs (${\cal M}($DBL$_{k})$)} 
Let consider a dynamic connected multigraph $M$ defined as follows $M=\cup^{\infty}_{r=0}\{(\{v_l\} \cup W,E(r),f_r,l_r)\}$ where $E(r)$ is a set of edges at round $r$, $W$ a set of nodes, $f_r: E(r) \rightarrow \{v_l\} \times W$ a function that maps each edge to the endpoints nodes and $l_r: E(r) \rightarrow \{1,2,\ldots,k\}$ a function labeling edges. 
$M$ belongs to ${\cal M}($DBL$_{k})$ if for each round $r$ the number of edges connecting a node $v \in W$ to $v_l$ is less than $k+1$, more formally  $\forall r, \forall v \in W, E^{v}(r)=f_r^{-1}(v_l,v) ::  1 \leq |E^{v}(r)| \leq k  $;  Given $ e',e'' \in E^{v}(r)$ we have $l_r(e') \neq l_r(e'')$, that is if two edges $e',e''$ share the same non leader node as endpoint they must have different label, as example see edges that involve node $v$ in Figure \ref{figure:trasf}. For simplicity we will refer as $M_{r}$ the instance of $M$ at round $r$. 
Figure \ref{figure:trasf} shows an example of a dynamic connected multigraph $M$ at round $r$ belonging to ${\cal M}($DBL$_{3})$. We assume that when a node $v \in \{v_l\} \cup W$ receives a message from a node $w$ at round $r$ by edge $e$, it also obtains the label $l_r(e)$.

\begin{figure}[htbp] 
\begin{center}
\includegraphics[scale=0.8]{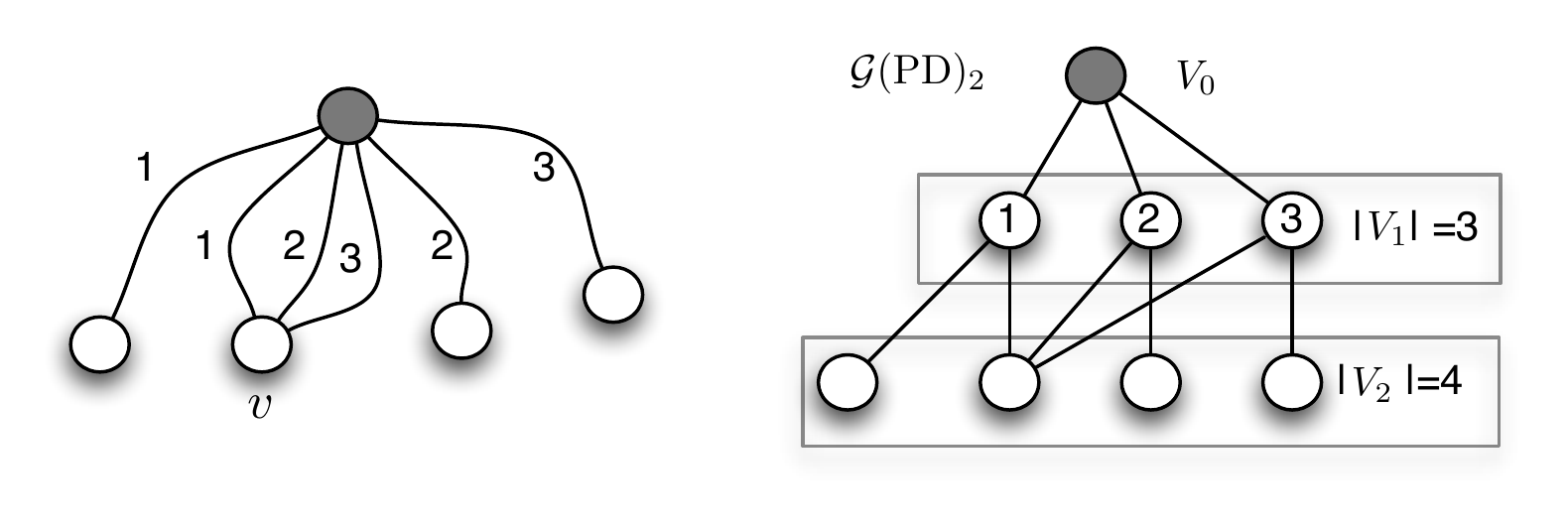}

\caption{Trasformation, at round $r$, from ${\cal M}(DBL_{3})$ multigraph to ${\cal G}(\text{PD})_{2}$. \label{figure:trasf} }

\end{center}

\end{figure}

\begin{lemma} \label{lemma:countingreduction}
Let us consider a dynamic connected multigraph $M$ in ${\cal M}($DBL$_{k})$. If any counting algorithm based on message passing takes more than $T$ rounds to complete on $M$, then there exists a graph $G$ in ${\cal G}$(PD$)_2$ such that any counting algorithm based on message passing requires more than $T$ rounds to complete on $G$. 
\end{lemma}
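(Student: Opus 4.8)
The plan is to prove the lemma by a simulation/reduction: from the hard multigraph $M$ we build a single dynamic graph $G\in{\cal G}(\text{PD})_2$ and show that any counting algorithm for ${\cal G}(\text{PD})_2$ running on $G$ can be mimicked, round for round, by a counting algorithm running on $M$, so a fast algorithm on $G$ would give an equally fast algorithm on $M$. Concretely, given $M=M_0,M_1,\ldots$, construct $G=G_0,G_1,\ldots$ exactly as in Figure~\ref{figure:trasf}: add $k$ \emph{relay} nodes $x_1,\ldots,x_k$, keep every $x_i$ adjacent to $v_l$ in every round, and for each round $r$ and each edge of $M_r$ joining $v_l$ to some $v\in W$ with label $i$, insert the edge $\{x_i,v\}$ into $G_r$. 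First I would check $G\in{\cal G}(\text{PD})_2$: every $x_i$ is always at distance $1$ from $v_l$; every $v\in W$ is never adjacent to $v_l$ but, since $|E^{v}(r)|\ge 1$, is adjacent to at least one relay in each round, hence is always at distance exactly $2$; and each $G_r$ is connected. Also $|V_G|=|V_M|+k$, with $k$ a fixed constant.

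\textbf{The simulation.} Fix any correct counting algorithm ${\cal B}$ for ${\cal G}(\text{PD})_2$; I want an algorithm ${\cal A}$ on $M$ that reproduces the execution of ${\cal B}$ on $G$. In ${\cal A}$, the leader $v_l$ stores and updates the local states of $v_l$ \emph{and of all $k$ relays} of $G$, while each $v\in W$ stores the state of its copy in $G$. To simulate one round: (i) $v_l$ computes the messages that $v_l$ and each $x_i$ would broadcast in $G$, and broadcasts the $k$-tuple of relay messages; each $v\in W$ computes and broadcasts the message its $G$-copy would send. (ii) Using the label $i$ attached to every received message (available by the model assumption on ${\cal M}(\text{DBL}_k)$), $v_l$ recovers, for each $i$, precisely the multiset of messages that $x_i$ receives from the $W$-side in $G$, and — knowing what it itself sent to the relays — updates the stored states of every $x_i$ and of $v_l$; symmetrically, a node $v$ receiving the $k$-tuple along a label-$i$ edge extracts the $i$-th component, which is exactly what $x_i$ delivers to $v$ in $G$, and updates its stored state. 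The role of the $k$-tuple is to circumvent the fact that the anonymous broadcast of $v_l$ must send one and the same message to all neighbours at once: the edge labels let each receiver pick the component intended for it. When the simulated $v_l$ outputs a count $c$, ${\cal A}$ makes the real $v_l$ output $c-k$.

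\textbf{Conclusion.} The simulation is exact and consumes one round of $M$ per round of $G$, so by an induction on rounds the stored states always coincide with those of ${\cal B}$ on $G$; by correctness of ${\cal B}$ on $G\in{\cal G}(\text{PD})_2$ the output satisfies $c=|V_G|=|V_M|+k$, hence ${\cal A}$ correctly counts on $M$. Therefore, if ${\cal B}$ terminated within $T$ rounds on $G$, then ${\cal A}$ would terminate within $T$ rounds on $M$, contradicting the hypothesis that every counting algorithm needs more than $T$ rounds on $M$. So no counting algorithm completes within $T$ rounds on this particular $G$, which is what the lemma asserts.

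\textbf{Main obstacle.} I expect the delicate point to be verifying the faithfulness of the simulation under anonymous broadcast with per-edge label information: one must check that $v_l$ in $M$ can genuinely reconstruct the incoming view of each relay of $G$ (this is where per-edge delivery with labels and unbounded bandwidth are used) and that, in the other direction, the $k$-tuple trick reproduces exactly what each $W$-node observes, so the inductive invariant on local states holds. The rest — that $G$ really lies in ${\cal G}(\text{PD})_2$, that each $G_r$ stays connected, and the constant offset $k$ in the output — is routine bookkeeping.
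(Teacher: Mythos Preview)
Your proposal is correct and follows essentially the same approach as the paper: the very same transformation of Figure~\ref{figure:trasf} (add $k$ relay nodes between $v_l$ and $W$, one per label) and the same reduction direction (a fast counting algorithm on $G$ yields a fast one on $M$). The only cosmetic difference is that the paper routes the argument through an intermediate graph $G^{id}$ in which the $k$ relays carry unique identifiers, observes that $v_l$ in $M$ can be viewed as the merged memory of $\{v_l\}\cup V_1$ in $G^{id}$, and then strips the identifiers; you instead build the anonymous $G$ directly and supply the explicit simulation mechanics (the $k$-tuple broadcast together with the per-edge label extraction) that the paper leaves implicit. Your version is in fact the more careful of the two, since the ``main obstacle'' you flag --- that the anonymous broadcast of $v_l$ must carry all $k$ relay messages at once and each $W$-node must recover exactly the right subset --- is precisely the point the paper waves past.
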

\begin{proof}
 From $M_r=(\{v_l\} \cup W,E(r),f_r,l_r)$ we build an instance $G^{id}_r=(V= \{(V_0=\{v_l\}) \cup V_1 \cup (V_2=W)\},E_{id}(r))$ belonging to $G^{id} \in {\cal G}$(PD$)_2$ such that $V_1$ contains $k$ nodes  having unique identifiers in $[1,\ldots,k]$, $V_0$ contains only the leader node $v_l$ and the set of nodes $V_2=W$.  Additionally, 
at round $r$ we have that $\exists e:(v,w) \in E_{id}(r)$ with $v \in V_1$ and $w \in V_2$ where $id(v)=j$ if and only if $\exists e' \in E(r)$ with $f_r(e')=(v_l,w), l_r(e')=j$ with $w \in W$. Figure \ref{figure:trasf}
shows the transformation at round $r$ between a dynamic graph in ${\cal M}($DBL$_{3})$ and one in ${\cal G}$(PD$)_2$. Let us notice that the node with label $1$ in $V_1$ at each $G^{id}_{r}$ is connected to the nodes in $V_2$ that correspond to nodes in $W$ that are connected in $M_{r}$ to $v_l$ by edges labeled with $1$.  As a consequence, the leader $v_l$ in $M$ is actually the union of local memories of processes in $\{v_l\} \cup V_1$ in $G^{id}$. Let us assume that there not exist a message passing algorithm solving the counting problem in $M$ with $T$ rounds, then it is not possible to count nodes in $V_2$ on $G^{id}$ in $T$ rounds even by merging the memories of $\{v_l \} \cup V_1$, by knowing the size $k$ of $V_1$ and by having unique IDs for nodes in $V_1$.  Consider now the dynamic graph $G$ derived by $G^{id}$  removing the identifiers of nodes in $V_1$. Counting nodes in $G$ is at least as hard as counting nodes in $G^{id}$. As an example, without identifiers the leader cannot realize if messages of two successive rounds arrive from the same node of $V_1$. Thus it is not possible for the leader to count the size of $G$ in less than $T$ rounds.
\end{proof}

 From the lemma follows that a lower bound for counting on ${\cal M}($DBL$_{k})$ holds also for graphs in ${\cal G}$(PD$)_2$.
Now we introduce some definitions on $M$. Let consider an instance $M$ of the family ${\cal M}($DBL$_{k})$. 

\begin{defin}\label{def-L} (Set of edge labels of a node at round $r$)
Given a node $v \in W$ at round $r$ we define the set of edge labels $L(v,r):\{l_1,\ldots,l_j\}$ with $l_i \in L(v,r)$ iff $\exists e \in E(r)$ and $l_r(e)=l_i$ and $f_r(e)=(v,v_l)$.
\end{defin}

As an example in Figure \ref{figure:trasf}, the edge label set of node $v$ at round $r$ is $\{1,2,3\}$. 

\begin{defin}\label{def-STATE} (State of a non-leader process)
Given a node $v \in W$ at round $r$,  we define the state $S(v,r)$ as an  ordered list $S(v,r):[(\bot),L(v,0),\ldots,L(v,r-1)]$ where $(\bot)$ is the first state of any non-leader node\footnote{For simplicity  whenever not necessary we omit the presence of $(\bot)$ as first element of $S(v,r)$.}. 
\end{defin}

Given a list $A: [L_0,L_1,\ldots,..,L_{r-1}]$ we have that $|A|$ denotes the number of nodes with the same state $S(v,r)=A$ at round $r$. Ref. Figure 1: we have $S(v,r+1)=[\bot,\ldots,\{1,2,3\} ]$ and $|S(v,r+1)|=1$ since $v$ is the only node connected to $v_l$ by $\{1,2,3\}$ at round $r$.

\begin{defin}\label{def-LeaderL} (State of a leader node at round $r$)
Given the leader $v_l$ at round $r$ we define the leader state $S(v_l,r)$ as $[C(v_l,0),\ldots,C(v_l,r-1)]$ where  $C(v_l,i)$ with $i<r$ is  a multiset of elements, $(j,S(v,i)) \in C(v_l,i)$ iff it exists a node $v$ with state $S(v,i)$ connected to $v_l$ by an edge with label $j$. 
\end{defin}

As for states of local nodes,  $|(j,S(v,r))|$ denotes  the number of nodes with state equal to $S(v,r)$ connected to $v_l$ by an edge with label $j$ at round $r$. Let us remark that the state of the leader $v_l$ can be constructed by a simple message passing protocol where  at each round each node sends to the leader its own state and where the leader node sends just  a dummy message.


%

\subsection{Lower Bound for ${\cal M}($DBL$_{k})$}
We introduce some notation on vectors and matrices used in this section.

\paragraph{Linear algebra notation} Given a vector $\mathbf{a} \in \mathbb{Z}^{n}$, we denote as $(\mathbf{a})_j$ the $j$-th component of $\mathbf{a}$ (with $1 \leq j \leq n$) and as $\sum \mathbf{a}$ the sum of all components of $\mathbf{a}$. Additionally, $\sum^{+} \mathbf{a}$ (resp. $\sum^{-} \mathbf{a}$)  denotes  the sum of only the positive (resp.  negative) components of  $\mathbf{a}$. Given two vectors $\mathbf{a},\mathbf{b}$ we have $\begin{bmatrix} \mathbf{a} \\ \mathbf{b} \end{bmatrix}$ is the vector obtained by appending the elements of the vector $\mathbf{b}$ after the last element of vector $\mathbf{a}$.  Given a matrix $\mathbf{M} \in \mathbb{Z}^{n , m}$ we denote with $(\mathbf{M})_j $ its $j$-th row (with $1 \leq j \leq n$) and we denote as $ker(\mathbf{M})$ the set of vectors $\mathbf{a}\in \mathbb{Z}^{m}$ such that $\mathbf{M}\mathbf{a}=\mathbf{0}$. 
We also consider the set of vectors $B=\{\mathbf{a}^{1},\ldots,\mathbf{a}^{\ell}\}$ that form a basis for $ker(\mathbf{M})$, i.e.,  
$ker(\mathbf{M})=SPAN(B)$.
Finally we denote as $\mathbf{a}_{r}$ the instance of vector $\mathbf{a}$ at round $r$.  \\

We prove the bound for the family ${\cal M}($DBL$_{k})$ by first proving the lower  bound for ${\cal M}($DBL$_{2})$. Considering the latter proof, we first introduce a system of equations that characterizes the states of the nodes of the multigraph at round zero. The lower bound for ${\cal M}($DBL$_{2})$ is then proved by studying the evolution of this system of equations through the rounds.

\noindent{\bf Consider $M \in {\cal M}($DBL$_{2})$ and $r=0$:}  At the end of round $0$ the leader has state $S(v_l,0):[\{(1,[\bot]),(2,[\bot])\}]$, this leader state can be generated by many different configurations of nodes and edges in $M$. Such configurations are determined by the number of non leader processes with states
 $[\{1\}],[\{2\}],[\{1,2\}]$ and they are solutions of the following system of equations at round $0$:

\begin{equation} \label{system:ml1}
\begin{array}{l l}
\begin{cases}
    |(1,[\bot])|=|[\{1\}]|+|[\{1,2\}]| & \\
     |(2,[\bot])|=|[\{2\}]|+|[\{1,2\}]| \\
  \end{cases} \\
\scriptstyle{r=0}
 \end{array} 
\end{equation}
with the additional constraint that any variable in the solution cannot assume a negative value. When the leader updates its state, in successive rounds, we have a new  system of equations. The system of equations \ref{system:ml1} can be written in a matrix form as follows:  \begin{equation} \label{m1} \mathbf{m}_0=\mathbf{M}_{0}\mathbf{s}_0 \end{equation} where 
 $\mathbf{M}_{0}: \begin{bmatrix}
  1 & 0 & 1  \\
  0 & 1 & 1 \\
  \end{bmatrix}$ represents the matrix of coefficients of the system at round $0$,  $\mathbf{m}_0$ is the column vector of constant terms (each component of $\mathbf{m}_r$ represents the multiplicity of a certain element in the state of the leader at round $r$) and $\mathbf{s}_0$ is a solution vector.  Let us remark that  $\mathbf{M}_{r}$  depends of the round only while  $\mathbf{m}_r$ depends of  the leader state at round $r$. As a consequence  $\mathbf{M}_{r}$ characterizes any multigraph of the family ${\cal M}($DBL$_{2})$.
     
 The matrix $\mathbf{M}_{0}$ is characterized by $ker(\mathbf{M}_{0})= SPAN( \mathbf{k}_{0}:\begin{bmatrix}  1 & 1 & -1\end{bmatrix}^{\intercal} )$. Solutions of the matrix equation \ref{m1} are related by the following linear combination with the kernel vector $\mathbf{k}_0$:  $\mathbf{s}_0'=\mathbf{s}_0+ t \mathbf{k}_{0}$ with $t \in \mathbb{N}$ and such that each component of $\mathbf{s}'_0$ is non negative. As a consequence, given $\mathbf{m}_0$ the possible solutions of (\ref{system:ml1}) are restricted to a finite discrete set of points over a segment with direction $\mathbf{k}_{0}$. 
  From the point of view of the leader each solution represents a distinct graph belonging to $ {\cal M}(\text{DBL}_2)$ with a different number of processes: $\sum \mathbf{s}'_0 - \sum \mathbf{s}_0 =t \sum \mathbf{k}_0=t$. 
 

 Considering the example of Figure \ref{figure:indist},
the system of equations at round $0$ for the multigraph $M$ is the following
\begin{equation} \label{ex-system:ml1}
\begin{array}{l l}
\begin{cases}
    2=|[\{1\}]|+|[\{1,2\}]| & \\
    2=|[\{2\}]|+|[\{1,2\}]| \\
  \end{cases} \\
\scriptstyle{r=0}
 \end{array} 
\end{equation}

where
 $\mathbf{m}_0:\begin{bmatrix} 2 & 2 \end{bmatrix}^{\intercal}$. For such system of equations a solution is  $\mathbf{s}_0: \begin{bmatrix} 0 & 0 & 2 \end{bmatrix}^{\intercal}$, then using the kernel transformation another solution is $\mathbf{s}_0'=\mathbf{s}_0 +2 \mathbf{k_0}: \begin{bmatrix} 2 & 2 & 0\end{bmatrix}^{\intercal}$, these two solutions correspond to two $M,M' \in {\cal M}(\text{DBL}_2)$ of different size that generate the same state $S(v_l,0)$ as depicted in Figure \ref{figure:indist}. These two graphs are indistinguishable from the leader at round $0$ thus the leader is not able to output a correct count.


  
 \begin{figure}[htbp] 
\begin{center}
\includegraphics[scale=0.9]{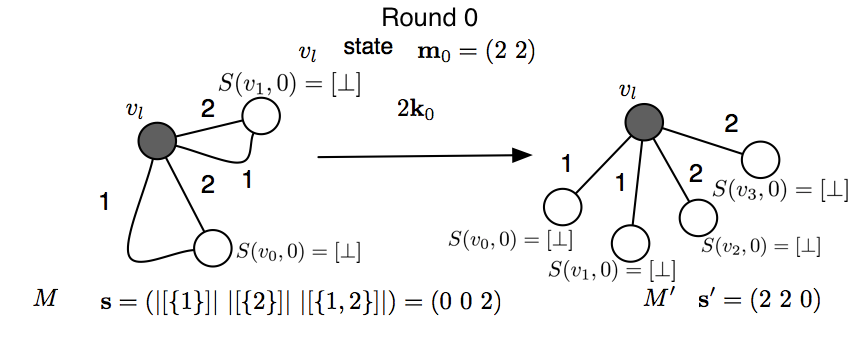}
\caption{Two dynamic multigraphs $ M, M' \in {\cal M}(\text{DBL}_{2})$ of different size that are indistinguishable at round $r=0$, the relationship among $M,M'$ is given by the kernel vector $\mathbf{k}_0$ \label{figure:indist} }

\end{center}

\end{figure}

The idea that we will use to show the lower bound is to characterize how the kernel space of $\mathbf{M}_r$ evolves and under which condition of the kernel space we have an unique solution, that corresponds to an unique size $|W|$. 

\paragraph{General structure of $\mathbf{M}_r$.}

At round $r$, the system of equations becomes $\mathbf{m}_r=\mathbf{M}_{r}\mathbf{s}_r$. 
The number of columns of $\mathbf{M}_r$ is equal to the number of all possible states of non-leader nodes, at round $r+1$ , which is  $column(r)=3^{r+1}$.
Each row of $\mathbf{M}_r$ corresponds to a possible connection $(j,S(v,r'))$ of $v_l$ at some round $0 \leq r' \leq r$. 
Thus the number of rows at round $r$ is two times the number of  existent states in $[0,r]$: $row(r)=2\sum_{k=0}^{r}3^k$.


As an example, at the end of round $1$, the system contains 8 equations ($2\cdot 3^0+2\cdot 3^1$) and 9 variables (i.e. $3^2$ rows) and the associated matrix $\mathbf{M}_{1}$ are:


\begin{figure}[H]
\hspace{-2cm}
\scriptsize
\subfloat{
\parbox{12cm}{
\begin{equation} \label{system:ml2}
\begin{array}{l l}
\begin{cases}
    |(1,[\bot])|=\sum_{\forall j \in \{\{1\},\{2\},\{1,2\} \}} |[\{1\},j]|+\sum_{\forall j \in \{\{1\},\{2\},\{1,2\} \}}|[\{1,2\},j]| & \\
     |(2,[\bot])|=\sum_{\forall j \in \{\{1\},\{2\},\{1,2\} \}}  |[\{2\},j]|+\sum_{\forall j \in \{\{1\},\{2\},\{1,2\} \}}|[\{1,2\},j]| & \\
          |(1,[\{1\}])|=|[\{1\},\{1\}]|+|[\{1\},\{1,2\}]| & \\
           |(1,[\{2\}])|=|[\{2\},\{1\}]|+|[\{2\},\{1,2\}]| & \\
 |(1,[\{1,2\}])|=|[\{1,2\},\{1\}]|+|[\{1,2\},\{1,2\}]| & \\
   |(2,[\{1\}])|=|[\{1\},\{2\}]|+|[\{1\},\{1,2\}]| & \\
   |(2,[\{2\}])|=|[\{2\},\{2\}]|+|[\{2\},\{1,2\}]| & \\
   |(2,[\{1,2\}])|=|[\{1,2\},\{2\}]|+|[\{1,2\},\{1,2\}]| 
   \end{cases}\\
   \scriptstyle{r=1}
  \end{array} 
\end{equation}}

} \subfloat{
\parbox{7cm}{
\begin{equation}\label{m2}
\mathbf{M}_{1}= \begin{bmatrix} 
1 &1 &1 &0 &0 &0 &1 &1 &1  \\
0 &0 &0 &1 &1 &1 &1 &1 &1  \\
1 &0 &1 &0 &0 &0 &0 &0 &0  \\
0 &0 &0 &1 &0 &1 &0 &0 &0  \\
0 &0 &0 &0 &0 &0 &1 &0 &1  \\
0 &1 &1 &0 &0 &0 &0 &0 &0  \\
0 &0 &0 &0 &1 &1 &0 &0 &0  \\
0 &0 &0 &0 &0 &0 &0 &1 &1  \\

\end{bmatrix} 
\end{equation} }
}

\end{figure}


Let us now consider how it is built the equation at round $r$ derived from the generic leader connection $(j,[x_0, \ldots, x_{r'-1}])$ with $j \in \{1,2\}$ introduced at round $r'$ in the system of equations, i.e.,$|(j,[x_0, \ldots, x_{r'-1}])|=|[x_0, \ldots, x_{r'-1},\{j\}]|+ |[x_0, \ldots, x_{r'-1},\{1,2\}]$. 
This equation at round $r$ becomes:
  \[  \quad
 \scriptsize 
 |(j,[ x_0,\ldots,x_{r'} ])|=\sum_{\forall s \in (\{1\}|\{2\} | \{1,2\})^{r-r'} } |[x_0,\ldots,x_{r'-1},\{j\},s]|+\sum_{\forall s \in (\{1\}|\{2\} | \{1,2\})^{r-r'}}[x_0,\ldots,x_{r'-1},\{1,2\}, s]|
 \]
 where $ (\{1\}|\{2\} | \{1,2\})^{r-r'}$ is the set of all possible lists with elements in $\{\{1\},\{2\},\{1,2\}\}$ and size $r-r'$.  As an example see the equation associated with  $|(1,[\bot])|$ at round $0$ (see  Equation \ref{system:ml1}) and the equation associated with  $|(1,[\bot])|$ at round $1$ (see Equation \ref{system:ml2}).

  Let notice  $ker(\mathbf{M}_{1})=\{\mathbf{k}_1 =\begin{bmatrix}
1 &1 &-1 &1 &1 &-1 &-1 &-1 &1
\end{bmatrix}^{\intercal}\}$, thus we have $<\mathbf{k}_1>=1$ with $<\mathbf{k}_1>^+=5,\,<\mathbf{k}_1>^-=4$. Now let us consider a solution $\mathbf{s}_1$ with $<\mathbf{s}_1> \leq 3$. It is easy to see that $\mathbf{s'}_1=\mathbf{s}_1+t\mathbf{k}_1$ has at least one negative component for any $t \neq 0$: since  $<\mathbf{k}_1>^-=4$  is not possible to have a solution $\mathbf{s}_1$  that as at least one unitary component for each negative component of $\mathbf{k}_1$. Thus $\mathbf{s'}_1$ cannot be a solution that represents a dynamic multigraph. 

This means that if $n \leq 3$ is possible to obtain the count in $2$ rounds, since there is only one possible solution of the system of equations for any $\mathbf{m}_1$ generated by a multigraph with $n \leq 3$. For $n \geq 4$ we have at least two possible solutions of different size, i.e. we have $\mathbf{s}_1=\begin{bmatrix}
0 &0 &1 &0 &0 &1 &1 &1 &0
\end{bmatrix}^{\intercal}$ with $n=4$ processes and $\mathbf{s}':\begin{bmatrix}
1 &1 &0 &1 &1 &0 &0 &0 &1
\end{bmatrix}^{\intercal}=\mathbf{s}_1+\mathbf{k}_1$ with $n=5$. It is easy to check that $\mathbf{m_1}=\mathbf{M}_{1}\mathbf{s}_1=\mathbf{M}_{1}\mathbf{s}_1'$, thus we have two multigraphs of different sizes that generate the same state $\mathbf{m}_1$ at $v_l$, see Figure \ref{figure:indist2}. \\

 \begin{figure}[htbp] 
 \hspace{-1cm}
\includegraphics[scale=0.45]{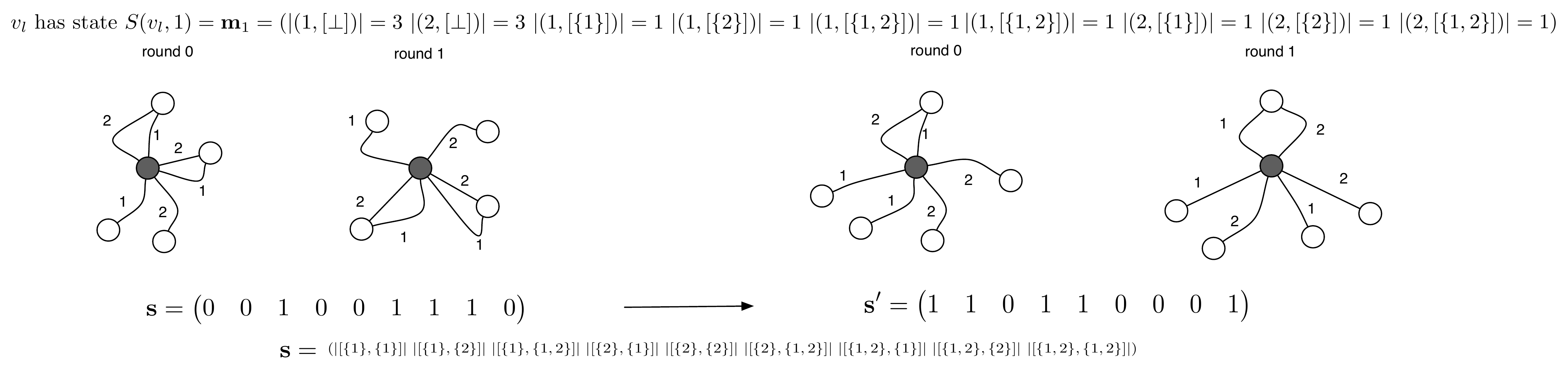}

\caption{Two dynamic multigraph $ M, M' \in {\cal M}(\text{DBL}_{2})$ of different size that are indistinguishable at round $r=1$, they induce the same leader state $S(v_l,1)=\mathbf{m}_1$, the relationship among the two is given by the kernel vector $\mathbf{k}_1$ \label{figure:indist2} }

\end{figure}

In order to simplify the proofs, we order columns of $\mathbf{M}_r$  lexicographically with respect to the state of a process. We consider the following order among elements $\{1\} < \{2\} <\{1,2\}$. As a consequence, the first column of $\mathbf{M}_{r}$ will correspond to state: $|[\{1\}, \ldots \{1\}]|$, the second column $|[\{1\}, \ldots \{1\},\{2\}]|$ and the last one $|[\{1,2\},\ldots,\{1,2\}]|$. Rows are ordered in the same way. This ordering has been used in Equation \ref{m1} and Equation \ref{m2}. Fixed this ordering we can use a connection $(j,[x_0,\ldots,x_{r'-1}])$ to denote a row $\mathbf{v}=(\mathbf{M}_{r})_{(j,[x_0,\ldots,x_{r'-1}])}$ and a node state to denote a single component of a vector, i.e. $(\mathbf{v})_{[x_0,\ldots,x_{r-1}]}$.
Moreover we have that the row vector $(\mathbf{M}_r)_{(j,[x_0,\ldots,x_{r'-1}])}$ will have two trails of ones, with length $3^{r-r'}$, for all columns in the form $|[x_0,\ldots,x_{r'-1},\{j\},s]|,|[x_0,\ldots,x_{r'-1},\{1,2\},s ]|$ with $ s \in (\{1\}|\{2\} | \{1,2\})^{r-r'}$, and zero for all the other columns (as reference see Eq. \ref{m2}).


\medskip

In the following lemmas we specifically characterize the structure of the kernel space of $\mathbf{M}_{r}$ in order to identify at which round there is an unique solution. 

\begin{lemma}\label{lemma:uniquekernel}
Let us consider the matrix $\mathbf{M}_r$ of the family  ${\cal M}($DBL$_{2})$ at round $r$. The dimension of the kernel space of  $\mathbf{M}_{r}$ is one (i.e., $ker(\mathbf{M}_{r})=SPAN(\mathbf{k}_{r})$).
\end{lemma}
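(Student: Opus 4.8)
The plan is to prove the statement by induction on $r$, but with a strengthened hypothesis that also keeps track of $\sum\mathbf{k}_r$; this extra bookkeeping is exactly what makes the induction close. Concretely, I would prove that for every $r\ge 0$ one has $ker(\mathbf{M}_r)=SPAN(\mathbf{k}_r)$ for a single vector $\mathbf{k}_r$ with $\sum\mathbf{k}_r=1$ (so in particular the kernel is one-dimensional, and the generator is automatically primitive). The base case $r=0$ is immediate: $\mathbf{M}_0=\begin{bmatrix}1&0&1\\0&1&1\end{bmatrix}$ has kernel spanned by $\mathbf{k}_0=\begin{bmatrix}1&1&-1\end{bmatrix}^{\intercal}$, with $\sum\mathbf{k}_0=1$.

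The core of the inductive step is a recursive block decomposition of $\mathbf{M}_r$. Group the $3^{r+1}$ columns by their first component $x_0\in\{\{1\},\{2\},\{1,2\}\}$ into three consecutive blocks $B_{\{1\}},B_{\{2\}},B_{\{1,2\}}$ (this is precisely the lexicographic ordering already fixed, with $\{1\}<\{2\}<\{1,2\}$), and split the rows into the two rows indexed by $(1,[\bot])$ and $(2,[\bot])$ (those introduced at round $r'=0$) and all the remaining rows $(j,[x_0,\dots,x_{r'-1}])$ with $r'\ge 1$. Using the explicit description of a generic row as carrying two length-$3^{r-r'}$ trails of ones, in columns $[x_0,\dots,x_{r'-1},\{j\},s]$ and $[x_0,\dots,x_{r'-1},\{1,2\},s]$, one checks three facts: (a) the row $(1,[\bot])$ is all-ones on $B_{\{1\}}\cup B_{\{1,2\}}$ and zero on $B_{\{2\}}$, and symmetrically the row $(2,[\bot])$ is all-ones on $B_{\{2\}}\cup B_{\{1,2\}}$ and zero on $B_{\{1\}}$; (b) every row with $r'\ge 1$ has support contained in the single block $B_{x_0}$ determined by its first component; and (c) after deleting the leading symbol $x_0$ from both the row index $(j,[x_0,y_0,\dots,y_{r''-1}])$ and the column index $[x_0,z_0,\dots,z_{r-1}]$, the restriction of the $r'\ge 1$ rows with first component $c$ to the block $B_c$ is, entry for entry, a copy of $\mathbf{M}_{r-1}$, because the defining $1$-pattern of $\mathbf{M}_{r-1}$ on row $(j,[y_0,\dots,y_{r''-1}])$ transfers verbatim. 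Fact (c) is the one step requiring genuine care — it is a plain match of the two row/column index conventions, and I would pin the notation down by checking it on $\mathbf{M}_1$ of Eq.~\ref{m2}, whose blocks $\{1,2,3\},\{4,5,6\},\{7,8,9\}$ each restrict exactly to $\mathbf{M}_0$.

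Granting the decomposition, write $\mathbf{x}=\begin{bmatrix}\mathbf{x}^{\{1\}}\\\mathbf{x}^{\{2\}}\\\mathbf{x}^{\{1,2\}}\end{bmatrix}$ with each block in $\mathbb{Z}^{3^{r}}$; then $\mathbf{M}_r\mathbf{x}=\mathbf{0}$ is equivalent to $\mathbf{M}_{r-1}\mathbf{x}^{c}=\mathbf{0}$ for each $c\in\{\{1\},\{2\},\{1,2\}\}$ together with the two scalar equations $\sum\mathbf{x}^{\{1\}}+\sum\mathbf{x}^{\{1,2\}}=0$ and $\sum\mathbf{x}^{\{2\}}+\sum\mathbf{x}^{\{1,2\}}=0$. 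By the inductive hypothesis each $\mathbf{x}^{c}=a_c\,\mathbf{k}_{r-1}$ for a scalar $a_c$, and since $\sum\mathbf{k}_{r-1}=1\neq 0$ the two scalar equations reduce to $a_{\{1\}}+a_{\{1,2\}}=0$ and $a_{\{2\}}+a_{\{1,2\}}=0$, forcing $(a_{\{1\}},a_{\{2\}},a_{\{1,2\}})$ to be a multiple of $(1,1,-1)$. Hence $ker(\mathbf{M}_r)=SPAN(\mathbf{k}_r)$ with $\mathbf{k}_r=\begin{bmatrix}\mathbf{k}_{r-1}\\\mathbf{k}_{r-1}\\-\mathbf{k}_{r-1}\end{bmatrix}$, and $\sum\mathbf{k}_r=\sum\mathbf{k}_{r-1}=1$, which closes the induction (and recovers, e.g., the displayed $\mathbf{k}_1$). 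The delicate point — and the reason a naive induction on ``the kernel is one-dimensional'' alone does not go through — is the use of $\sum\mathbf{k}_{r-1}\neq 0$: if that sum vanished, the two scalar constraints would be vacuous and $ker(\mathbf{M}_r)$ would have dimension three rather than one. I therefore expect maintaining the invariant $\sum\mathbf{k}_r=1$, rather than the block bookkeeping of Fact (c), to be the conceptual crux.
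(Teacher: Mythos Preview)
Your argument is correct, but it is a genuinely different route from the paper's. The paper does not compute the kernel directly; instead it proves that $\mathbf{M}_r$ has full row rank and then invokes rank--nullity, using the count $3^{r+1}-2\sum_{k=0}^{r}3^k=1$. Its induction is organized around a \emph{row} decomposition $\mathbf{M}_r=\begin{bmatrix}\mathbf{M}'_{r-1}\\ \mathbf{U}\end{bmatrix}$, where $\mathbf{M}'_{r-1}$ is obtained from $\mathbf{M}_{r-1}$ by replacing each $1$/$0$ entry with $[1\,1\,1]$/$[0\,0\,0]$ (this is the grouping of columns by their \emph{last} symbol), and $\mathbf{U}$ collects the new rows introduced at round $r$; linear independence is then checked blockwise. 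The explicit form of $\mathbf{k}_r$ and the fact $\sum\mathbf{k}_r=1$ are established only afterwards, in the two subsequent lemmas. Your proof instead performs a \emph{column} decomposition by the \emph{first} symbol $x_0$, recognizes three embedded copies of $\mathbf{M}_{r-1}$ plus the two all-ones constraints from $(j,[\bot])$, and solves for the kernel recursively. What you gain is that Lemmas~\ref{lemma:uniquekernel}, \ref{lemma:kernelvector}, and the identity $\sum\mathbf{k}_r=1$ of Lemma~\ref{lemma:kernels2} fall out in a single stroke; the price is that you must carry the invariant $\sum\mathbf{k}_{r-1}\neq 0$ through the induction, whereas the paper's rank argument needs no such side condition. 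Both are valid; your version is arguably more structural, the paper's more modular.
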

\begin{proof} 
We first show that rows of $\mathbf{M}_{r}$ are linearly independent, thus that the rank of the matrix is equal to the number of rows. 
The proof is by induction:
\begin{itemize}
\item Base Case $r=0$:  $\mathbf{M}_{0}=\begin{bmatrix} 1 &0 & 1 \\ 0 & 1 & 1 \end{bmatrix}$, $det(\mathbf{M}_{0})=1$ thus the rows are linearly independent. 
\item Inductive Case $r$:  $\mathbf{M}_{r}$ can be written  as $\mathbf{M}_{r}= \begin{bmatrix} \mathbf{M}'_{r-1}  \\ \mathbf{U} \end{bmatrix}$, where $\mathbf{M}'_{r-1}$ is the matrix obtained by $\mathbf{M}_{r-1}$ substituting each element $1/(0)$ of $\mathbf{M}_{r-1}$ with a row vector $\begin{bmatrix} 1 & 1 & 1 \end{bmatrix} / (\begin{bmatrix} 0 & 0 & 0 \end{bmatrix} )$. Now by inductive hyp. we have that all rows of $\mathbf{M}_{r-1}$ are linearly independent. 
This implies that also the rows of $\mathbf{M}'_{r-1}$ are linearly independent, this can be easily shown by contradiction, let us suppose that we have $(\mathbf{M}'_{r-1})_{s}=x_{a}( \mathbf{M}'_{r-1})_{a}+x_b (\mathbf{M}'_{r-1})_{b}$ for some rows $s,a,b$ and two coefficient  $x_a,x_b$, this means that also if we take the subvectors $\mathbf{v}^{1},\mathbf{v}^{2},\mathbf{v}^{3}$ of $(\mathbf{M}'_{r-1})_{s},(\mathbf{M}'_{r-1})_{a},(\mathbf{M}'_{r-1})_{b}$, obtained by taking the components in position $j $ such that $jmod3=0$, we must have $\mathbf{v}^{1}=x_a\mathbf{v}^{2}+x_b \mathbf{v}^{3}$ but this could be also written as $(\mathbf{M}_{r-1})_{s}=x_{a} (\mathbf{M}_{r-1})_{a}+x_b( \mathbf{M}_{r-1})_{b}$ that is clearly a contradiction since the rows of $\mathbf{M}_{r-1}$ are linearly independent. We now show that a row of $\mathbf{U}$ cannot be expressed as linear combination of rows of $\mathbf{M}'_{r-1}$, we have that the row $\mathbf{U}_{c}$ corresponding to connection $c:(j, [x_0,\ldots,x_{r-1}])$,  has only two elements different from zero contained in the subvector $\begin{bmatrix} 1 &0&1\end{bmatrix}$ (if $j=1$) or $\begin{bmatrix} 0 &1&1\end{bmatrix}$ (if $j=2$) positioned in the columns with the form $[x_0,\ldots,x_{r-1},(\{1\}| \{2\} |\{1,2\})^{1} ]$. Now,  for each row $(\mathbf{M}'_{r-1})_i$ considering only the values of columns $[x_0,\ldots,x_{r-1},(\{1\}| \{2\} |\{1,2\})^{1} ]$, we get  a subvector that is either $\begin{bmatrix} 1 &1 &1 \end{bmatrix}$  or $\begin{bmatrix} 0 &0 &0 \end{bmatrix}$. Therefore it follows  that $\begin{bmatrix} 1 &0&1\end{bmatrix}$ or $\begin{bmatrix} 0 &1&1\end{bmatrix}$ cannot be expressed as linear combination of the row vectors of $\mathbf{M}'_{r-1}$. 

We have to show that the rows vector of $\mathbf{U}$ are linearly independent. If we consider the sets of 3 columns in the form $[x_0,\ldots,x_{r-1},(\{1\}| \{2\} |\{1,2\})^{1} ]$, only two rows have some elements different from zero: they are either $\begin{bmatrix} 1 &0&1\end{bmatrix}$ or $\begin{bmatrix} 0 &1&1\end{bmatrix}$ that are linearly independent. 
\end{itemize}

This implies, for the rank-nullity theorem \cite{linearalgebra}, that the size of the kernel is $|ker(\mathbf{M}_{r})|= column(r)-row(r)= 3^{r+1} - 2\sum_{k=0}^{r}3^k=1$.  
\end{proof}

\begin{lemma}\label{lemma:kernelvector}
Let us consider the matrix $\mathbf{M}_r$ of the family  ${\cal M}($DBL$_{2})$ at round $r$. We have \\
$\mathbf{k}_{r}=\begin{bmatrix}  \mathbf{k}_{r-1} &  \mathbf{k}_{r-1} & - \mathbf{k}_{r-1} \end{bmatrix}^{\intercal}$ with $\mathbf{k}_{-1}=1$. 
\end{lemma}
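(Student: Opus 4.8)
The plan is to lean on Lemma~\ref{lemma:uniquekernel}: since $ker(\mathbf{M}_{r})$ is one-dimensional, it suffices to exhibit a single nonzero integer vector of the asserted block form that is killed by $\mathbf{M}_{r}$; as this vector will have all entries in $\{+1,-1\}$ it is primitive, hence it is (up to the sign fixed by $\mathbf{k}_{-1}=1$) exactly $\mathbf{k}_{r}$. I would argue by induction on $r$. For $r=0$ the two rows of $\mathbf{M}_{0}$ are $(1,0,1)$ and $(0,1,1)$, and $[\mathbf{k}_{-1},\mathbf{k}_{-1},-\mathbf{k}_{-1}]^{\intercal}=[1,1,-1]^{\intercal}$ is annihilated by both and is nonzero, in agreement with the value of $\mathbf{k}_{0}$ computed in the text. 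For the inductive step it is convenient to first unroll the claimed recursion into a closed form: with columns ordered lexicographically as fixed in the text (using $\{1\}<\{2\}<\{1,2\}$), the component of the candidate vector indexed by a node state $[x_{0},\dots,x_{r}]$ is $(-1)^{z(x_{0},\dots,x_{r})}$, where $z(x_{0},\dots,x_{r})$ counts the indices $i$ with $x_{i}=\{1,2\}$. This is immediate by induction from $\mathbf{k}_{r}=[\mathbf{k}_{r-1},\mathbf{k}_{r-1},-\mathbf{k}_{r-1}]^{\intercal}$, because in lexicographic order the first (resp.\ second) block of $3^{r}$ columns lists the states starting with $\{1\}$ (resp.\ $\{2\}$), which leave $z$ unchanged relative to the length-$r$ suffix, while the third block lists the states starting with $\{1,2\}$, which raise $z$ by one.

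It then remains to check $\mathbf{M}_{r}\mathbf{k}_{r}=\mathbf{0}$ row by row using this closed form. Recall from the text that the row of $\mathbf{M}_{r}$ attached to a leader connection $(j,[x_{0},\dots,x_{r'-1}])$ introduced at round $r'\le r$ with $j\in\{1,2\}$ carries ones exactly in the columns $[x_{0},\dots,x_{r'-1},\{j\},s]$ and $[x_{0},\dots,x_{r'-1},\{1,2\},s]$ for $s$ ranging over $(\{1\}|\{2\}|\{1,2\})^{r-r'}$, and zeros elsewhere. Substituting the closed form and pulling out the common factor $(-1)^{z(x_{0},\dots,x_{r'-1})}$, the corresponding entry of $\mathbf{M}_{r}\mathbf{k}_{r}$ equals $(-1)^{z(x_{0},\dots,x_{r'-1})}\big[(-1)^{0}+(-1)^{1}\big]\sum_{s}(-1)^{z(s)}=0$, since appending $\{j\}$ with $j\in\{1,2\}$ does not change $z$ whereas appending $\{1,2\}$ increases it by one, so the two groups of ones contribute termwise-opposite values (regardless of the value of $\sum_{s}(-1)^{z(s)}$). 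As this holds for every row, $\mathbf{k}_{r}\in ker(\mathbf{M}_{r})$; being nonzero and living in a one-dimensional kernel (Lemma~\ref{lemma:uniquekernel}), it spans $ker(\mathbf{M}_{r})$, which is the assertion.

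I do not expect a genuine obstacle here once Lemma~\ref{lemma:uniquekernel} is available; the statement is essentially bookkeeping. The only point that needs care is keeping the column ordering straight: the block decomposition in the recursion refers to the \emph{first} coordinate of a node state, whereas the sums that define the rows of $\mathbf{M}_{r}$ run over the \emph{last} $r-r'$ coordinates, and it is the coordinate-symmetric closed form $(-1)^{z(\cdot)}$ that reconciles the two. An alternative, slightly more structural route avoids the closed form by using the decomposition $\mathbf{M}_{r}=\left[\begin{array}{c}\mathbf{M}'_{r-1}\\ \mathbf{U}\end{array}\right]$ from the proof of Lemma~\ref{lemma:uniquekernel} together with the collapse identity $\sum_{x_{r}}(\mathbf{k}_{r})_{[y_{0},\dots,y_{r-1},x_{r}]}=(\mathbf{k}_{r-1})_{[y_{0},\dots,y_{r-1}]}$: this reduces $\mathbf{M}'_{r-1}\mathbf{k}_{r}$ to $\mathbf{M}_{r-1}\mathbf{k}_{r-1}=\mathbf{0}$, and the new rows $\mathbf{U}$ are handled by the same sign cancellation.
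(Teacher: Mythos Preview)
Your argument is correct. Both your proof and the paper's proceed by induction, invoke Lemma~\ref{lemma:uniquekernel} to reduce the task to exhibiting one nonzero kernel vector, and verify the base case $\mathbf{k}_{0}=[1,1,-1]^{\intercal}$ identically. The genuine difference is in the inductive step. The paper does not write down the closed form $(-1)^{z(\cdot)}$; instead it builds the candidate as $\mathbf{k}=[(\mathbf{k}_{r-1})_{1}\mathbf{k}_{0},\dots,(\mathbf{k}_{r-1})_{3^{r}}\mathbf{k}_{0}]^{\intercal}$ (i.e.\ blocking on the \emph{last} coordinate), verifies the old rows by collapsing each triple of last-coordinate terms via $\sum\mathbf{k}_{0}=1$ and then invoking the inductive hypothesis $\mathbf{M}_{r-1}\mathbf{k}_{r-1}=\mathbf{0}$, handles the new rows via $(\mathbf{k}_{0})_{j}+(\mathbf{k}_{0})_{3}=0$, and finally needs a separate step to reconcile this last-coordinate blocking with the statement's first-coordinate blocking $[\mathbf{k}_{r-1},\mathbf{k}_{r-1},-\mathbf{k}_{r-1}]^{\intercal}$. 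Your explicit closed form is coordinate-symmetric, so it dispenses with that reconciliation and lets every row (old or new) vanish by the same pairwise cancellation $(-1)^{0}+(-1)^{1}=0$ at position $r'$, without appealing to $\mathbf{M}_{r-1}\mathbf{k}_{r-1}=\mathbf{0}$. The alternative you sketch at the end (via $\mathbf{M}_{r}=\left[\begin{smallmatrix}\mathbf{M}'_{r-1}\\ \mathbf{U}\end{smallmatrix}\right]$ and a collapse identity) is essentially the paper's route.
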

\begin{proof} The proof is done by induction: 
\begin{itemize}[noitemsep,nolistsep] 

\item Base Case, $round=0$. $\mathbf{k}_{0}=\begin{bmatrix} 1 & 1 &-1 \end{bmatrix}^{\intercal}$ implies $\mathbf{0}=\mathbf{M}_0\mathbf{k}_0$.\smallskip

\item Inductive Case, $round=r$. We assume $\mathbf{k}_{r-1}  = \begin{bmatrix}  \mathbf{k}_{r-2}  & \mathbf{k}_{r-2} &  - \mathbf{k}_{r-2} \end{bmatrix}^{\intercal}$.
We show a vector $\mathbf{k}$ such that its product for the rows of $\mathbf{M}_r$ corresponding to $c'=(j,l':[x_0 ,\ldots ,x_{r'-1}])$, with $r' < r$ and $j \in \{1,2\}$, is $0$. Then we show that the same holds for the remaining rows of $\mathbf{M}_r$ that corresponds to connection $c=(j,l:[x_0 ,\ldots ,x_{r-1}])$, and finally we show that $\mathbf{k}=\mathbf{k}_r=\begin{bmatrix}  \mathbf{k}_{r-1} &  \mathbf{k}_{r-1} & - \mathbf{k}_{r-1} \end{bmatrix}^{\intercal}$.

 Let us consider the row-vector product $(\mathbf{M}_{r-1})_{c'}\mathbf{k}_{r-1}$ at round $r-1$, by definition of kernel we have: 
 
 \begin{equation}\label{kernull}
 0=\sum_{\forall s \in (\{1\} |\{2\} |\{1,2\})^{r-r'-1} } (\mathbf{k}_{r-1})_{|[l',\{j\},s]|}+\sum_{\forall s \in (\{1\}| \{2\}|\{1,2\})^{r-r'-1} } (\mathbf{k}_{r-1})_{|[l',\{1,2\},s]|}
 \end{equation}

 Let us build a vector  $\mathbf{k}=\begin{bmatrix} (\mathbf{k}_{r-1})_{1}\mathbf{k}_{0}  & (\mathbf{k}_{r-1})_{2}\mathbf{k}_{0} & \ldots & (\mathbf{k}_{r-1})_{3^{r}}\mathbf{k}_{0} \end{bmatrix}^{\intercal} $ and let us examine the row-vector product $(\mathbf{M}_{r})_{c}\mathbf{k}$ :

 \begin{equation}\label{smrt}
  \hspace{-2cm}
  \quad  \sum_{\forall   s \in (\{1\} |\{2\} |\{1,2\})^{r-r'-1} }( (\mathbf{k})_{|[l',(j), s,\{1\}]|}+ (\mathbf{k})_{|[l',(j), s,\{2\}]|}+ (\mathbf{k})_{|[l',(j), s,\{1,2\}]|} )+\sum_{\forall  s \in (\{1\} |\{2\} |\{1,2\})^{r-r'}} (\mathbf{k})_{|[l',\{1,2\},s]|}
 \end{equation}
 the first term of Eq. \ref{smrt} can be expressed as follow
\begin{gather*}
  \hspace{-2cm}
 \sum_{\forall  s \in (\{1\} |\{2\} |\{1,2\})^{r-r'-1} }( (\mathbf{k})_{|[l',(j), s,\{1\}]|}+ (\mathbf{k})_{|[l',(j), s,\{2\}]|}+ (\mathbf{k})_{|[l',(j), s,\{1,2\}]|} )=\\ \sum_{\forall  s \in (\{1\} |\{2\} |\{1,2\})^{r-r'-1} }( (\mathbf{k}_{r-1})_{|[l',(j),s]|})((\mathbf{k}_0)_1+(\mathbf{k}_0)_2+(\mathbf{k}_0)_3 )=\\ \sum_{\forall  s \in (\{1\} |\{2\} |\{1,2\})^{r-r'-1} }(\mathbf{k}_{r-1})_{|[l',(j),s]|}
\end{gather*}
 
The second term  of Eq. \ref{smrt} can be rewritten as the first term, then by applying Eq. \ref{kernull}, we have: 
 
 $$ (\mathbf{M}_{r})_{c'}\mathbf{k} = (\mathbf{M}_{r-1})_{c'}\mathbf{k}_{r-1}=0$$

Now let us consider the row $c=(j,l:[x_0,\ldots,x_{r-1}])$ the row-vector product is $(\mathbf{M}_{r})_{c}\mathbf{k}$:
$$(\mathbf{k})_{|[l,\{j\}]|}+(\mathbf{k})_{|[l,\{1,2\}]|}=(\mathbf{k}_{r-1})_{|[l]|} ((\mathbf{k}_{0})_{j} +(\mathbf{k}_{0})_{3})=0$$


Thus we have $\mathbf{0}=\mathbf{M}_{r}\mathbf{k}$. Now we have to prove that $\mathbf{k}=\begin{bmatrix}  \mathbf{k}_{r-1} &  \mathbf{k}_{r-1} &- \mathbf{k}_{r-1} \end{bmatrix}^{\intercal}$.\\
 For inductive hypothesis we have $\mathbf{k}_{r-1}=\begin{bmatrix} \mathbf{k}_{r-2} & \mathbf{k}_{r-2} & - \mathbf{k}_{r-2} \end{bmatrix}^{\intercal}$, moreover we have, for Lemma \ref{lemma:uniquekernel}, that $\mathbf{k}_{r-1}=\begin{bmatrix} (\mathbf{k}_{r-2})_{1}\mathbf{k}_{0}  & (\mathbf{k}_{r-2})_{2}\mathbf{k}_{0} & \ldots & (\mathbf{k}_{r-2})_{3^{r-1}}\mathbf{k}_{0} \end{bmatrix}^{\intercal}$. 
This means  the first $3^{r}$ components of $\mathbf{k}'$ are $\mathbf{k}_{r-1}$, the same holds for the second $3^{r}$ components, and the last $3^{r}$ are $-\mathbf{k}_{r-1}$. This completes the proof. 

 \end{itemize}
\end{proof}

The proof of the following lemma can be found in the Appendix
\begin{lemma}\label{lemma:kernels2} Let us consider the matrix $\mathbf{M}_r$ of the family  ${\cal M}($DBL$_{2})$ at round $r$.\\ We have: $\begin{cases} min(\sum^{+} \mathbf{k}_{r}, \sum^{-} \mathbf{k}_{r} )=\sum^{-} \mathbf{k}_{r}=\frac{1}{2}(3^{r}+1)-1 & \\ \sum \mathbf{k}_{r} =1 \end{cases} $
\end{lemma}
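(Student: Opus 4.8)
The plan is a direct induction on $r$ built on the recursive shape of $\mathbf{k}_r$ from Lemma~\ref{lemma:kernelvector}: $\mathbf{k}_r=\begin{bmatrix}\mathbf{k}_{r-1} & \mathbf{k}_{r-1} & -\mathbf{k}_{r-1}\end{bmatrix}^{\intercal}$ with $\mathbf{k}_{-1}=1$. Two consequences of this formula are immediate. First, every entry of $\mathbf{k}_r$ is $+1$ or $-1$ (true for $\mathbf{k}_{-1}$ and $\mathbf{k}_0$, and the recursion only copies blocks and flips a sign); hence $\sum^{+}\mathbf{k}_r$ and $-\sum^{-}\mathbf{k}_r$ are exactly the counts of $+1$ and $-1$ entries, so $\sum^{+}\mathbf{k}_r-\sum^{-}\mathbf{k}_r$ equals the number of columns of $\mathbf{M}_r$, namely $3^{r+1}$. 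Second, splitting the sum over the three blocks, $\sum\mathbf{k}_r=\sum\mathbf{k}_{r-1}+\sum\mathbf{k}_{r-1}-\sum\mathbf{k}_{r-1}=\sum\mathbf{k}_{r-1}$, so $\sum\mathbf{k}_r$ is constant in $r$; since $\sum\mathbf{k}_0=1$ this yields $\sum\mathbf{k}_r=1$ for all $r\ge 0$, the second asserted identity.

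For the extremal sums put $p_r:=\sum^{+}\mathbf{k}_r$ and $n_r:=-\sum^{-}\mathbf{k}_r\ge 0$. The point is that negating a vector interchanges its positive and negative parts, $\sum^{+}(-\mathbf{v})=-\sum^{-}\mathbf{v}$ and $\sum^{-}(-\mathbf{v})=-\sum^{+}\mathbf{v}$; applying this blockwise to $\mathbf{k}_r$ gives
\begin{equation*}
p_r=2p_{r-1}+n_{r-1},\qquad n_r=2n_{r-1}+p_{r-1},
\end{equation*}
so $p_r+n_r=3\,(p_{r-1}+n_{r-1})$ while $p_r-n_r=p_{r-1}-n_{r-1}$. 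From the base values $p_0=2$, $n_0=1$ we obtain $p_r+n_r=3^{\,r+1}$ and $p_r-n_r=1$, hence $n_r=\tfrac12(3^{\,r+1}-1)=\tfrac12(3^{\,r+1}+1)-1$ and $p_r=n_r+1$. In particular $p_r>n_r$, so among the two one-sided sums the minimum is attained at the negative part $\sum^{-}\mathbf{k}_r$, of magnitude $\tfrac12(3^{\,r+1}+1)-1$; this is the claimed closed form, completing the induction.

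I do not foresee a genuine obstacle: once Lemma~\ref{lemma:kernelvector} is in hand the statement reduces to the two one-line recurrences above. The only delicate point is the sign bookkeeping of $\sum^{+}$ and $\sum^{-}$ under the negation of the third block — this negation is exactly what couples $p_r$ and $n_r$ (rather than letting each triple independently) and what forces $p_r-n_r=1$ at every round — together with anchoring the induction at a round where $\mathbf{k}_r$ is written out explicitly, e.g. $r=0$ (or the degenerate $r=-1$, $\mathbf{k}_{-1}=1$).
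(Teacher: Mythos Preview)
Your proposal is correct and follows essentially the same route as the paper: both arguments invoke the block recursion of Lemma~\ref{lemma:kernelvector} to obtain the coupled recurrences $p_r=2p_{r-1}+n_{r-1}$ and $n_r=2n_{r-1}+p_{r-1}$, deduce $p_r-n_r=1$ by induction from the base $p_0=2,\ n_0=1$, and then extract the closed form (the paper does the last step by solving $p_r=3p_{r-1}-1$, you by also using $p_r+n_r=3^{r+1}$, which is the same computation). Note that your derived value $\tfrac12(3^{r+1}+1)-1$ agrees with the paper's \emph{proof}; the exponent $3^{r}$ in the displayed statement is a typo.
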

\begin{proof}
Thanks to lemma \ref{lemma:kernelvector}, we have  $\sum^{+} \mathbf{k}_{r} =2 \sum^{+}\mathbf{k}_{r-1} + \sum^{-} \mathbf{k}_{r-1}$, $\sum^{-} \mathbf{k}_{r} =2 \sum^{-} \mathbf{k}_{r-1}  + \sum^{+}\mathbf{k}_{r-1}$ and $\sum^{+} \mathbf{k}_{0}=2$, $\sum^{-} \mathbf{k}_{0} =1$. We first prove, by induction, that $\sum \mathbf{k}_r=(\sum^{+} \mathbf{k}_{r} -\sum^{-} \mathbf{k}_{r})=1$.
\begin{itemize}[noitemsep,nolistsep]
 \item Base case \emph{ round=0}: $\sum^{+} \mathbf{k}_{0}- \sum^{-} \mathbf{k}_{0}=1$.
 \item Inductive Case \emph{ round=r:}  $(\sum^{+} \mathbf{k}_{r} - \sum^{-} \mathbf{k}_{r})= (2\sum^{+}\mathbf{k}_{r-1} + \sum^{-} \mathbf{k}_{r-1} - \sum^{+}\mathbf{k}_{r-1}  -2 \sum^{-} \mathbf{k}_{r-1} )=(\sum^{+}\mathbf{k}_{r-1} - \sum^{-} \mathbf{k}_{r-1})=1$.
 \end{itemize} 
 This result leads to the following recursive relation $\sum^{+} \mathbf{k}_r =3\sum^{+} \mathbf{k}_{r-1} -1$ for $\sum^{+} \mathbf{k}_r $ with base condition $\sum^{+} \mathbf{k}_0=2$. Solving the recursive relation we obtain $\sum^{+} \mathbf{k}_r=\frac{1}{2}(3^{r+1}+1)$. This implies  $min(\sum^{+} \mathbf{k}_{r} , \sum^{-} \mathbf{k}_{r} )=\frac{1}{2}(3^{r+1}+1) - 1$ where the last term takes into account the fact that the minimum is always the negative component and that $\sum \mathbf{k}_{r} =1$. 

\end{proof} 
From the previous lemma we have:

\begin{lemma}\label{lemma:bound}
Let us consider $M,M' \in {\cal M}($DBL$_{2})$ such that their sizes are: $|W|=n$ and $|W'|=n+1$. Does not exist an algorithm $\mathcal{A}_l$ that at round $r \leq \lfloor \text{log}_{3}(2|n|+1) \rfloor$ is able to distinguish if it is running on multigraph $M$ or $M'$.
\end{lemma}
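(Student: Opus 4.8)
The plan is to show that for every $n$ there is a pair of dynamic multigraphs $M,M'\in{\cal M}(\text{DBL}_2)$ with $|W|=n$ and $|W'|=n+1$ that produce \emph{the same leader state} through round $\rho:=\lfloor\log_3(2n+1)\rfloor$; since in this model the behaviour (and output) of a deterministic algorithm is a function of the information accumulated at $v_l$, such a pair cannot be told apart at any round $r\le\rho$, which is exactly the claim. The starting observation is that the information available to $v_l$ at round $r$ is encoded by the constant vector $\mathbf{m}_r$ of the system $\mathbf{m}_r=\mathbf{M}_r\mathbf{s}_r$, where $\mathbf{s}_r$ counts how many nodes of $W$ carry each of the $3^{r+1}$ possible states; because the leader state at round $r$ contains the leader states of all earlier rounds as prefixes, two multigraphs are indistinguishable through round $r$ iff they induce the same $\mathbf{m}_r$, i.e.\ iff their solution vectors differ by a vector of $ker(\mathbf{M}_r)$. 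By Lemma~\ref{lemma:uniquekernel} this kernel equals $SPAN(\mathbf{k}_r)$, and unwinding the recursion of Lemma~\ref{lemma:kernelvector} down to $\mathbf{k}_{-1}=1$ shows that every entry of $\mathbf{k}_r$ is $+1$ or $-1$.

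Next I would pin down for which $n$ the two sizes can coexist. Two solutions $\mathbf{s}_r$ and $\mathbf{s}_r+t\mathbf{k}_r$ describe multigraphs whose sizes differ by $t\sum\mathbf{k}_r=t$ (Lemma~\ref{lemma:kernels2} gives $\sum\mathbf{k}_r=1$), so for sizes $n$ and $n+1$ one must take $t=1$ and needs $\mathbf{s}_r\ge\mathbf{0}$ and $\mathbf{s}_r+\mathbf{k}_r\ge\mathbf{0}$, both integral. Since the entries of $\mathbf{k}_r$ lie in $\{-1,+1\}$, the second constraint only forces $(\mathbf{s}_r)_i\ge 1$ at each coordinate where $(\mathbf{k}_r)_i=-1$ and asks nothing elsewhere; hence such an $\mathbf{s}_r$ exists iff $n$ is at least the number of negative entries of $\mathbf{k}_r$, and when it exists one realises the total $n$ exactly by loading the surplus on any coordinate where $\mathbf{k}_r$ is $+1$ (there is always one). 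By Lemma~\ref{lemma:kernels2} that number of negative entries is $\tfrac12(3^{r}+1)-1$, so feasibility is equivalent to $n\ge\tfrac12(3^{r}+1)-1$, i.e.\ to $3^{r}\le 2n+1$, i.e.\ to $r\le\log_3(2n+1)$, i.e.\ to $r\le\rho$.

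To finish I would turn the vectors into actual multigraphs at $r=\rho$: fix a feasible $\mathbf{s}_\rho$ and set $\mathbf{s}'_\rho=\mathbf{s}_\rho+\mathbf{k}_\rho$. Each is realised by an explicit instance of ${\cal M}(\text{DBL}_2)$ — create the prescribed number of nodes of $W$, tag each with one state-sequence $(\ell_0,\dots,\ell_\rho)$, and at every round $t\le\rho$ join a node tagged this way to $v_l$ by exactly the edges carrying the labels of $\ell_t$ (after round $\rho$ keep any fixed connected configuration); since every $\ell_t$ is a non-empty subset of $\{1,2\}$ the connectivity and the $1\le|E^v(t)|\le 2$ constraints hold, and by Definition~\ref{def-STATE} the node's state at round $\rho$ is indeed $(\ell_0,\dots,\ell_{\rho-1})$. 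Call the results $M$ and $M'$: then $|W|=n$, $|W'|=n+1$, and $\mathbf{M}_\rho\mathbf{s}_\rho=\mathbf{M}_\rho\mathbf{s}'_\rho$, so $v_l$ has the same state in both executions at round $\rho$ and at all earlier rounds. An induction over rounds then shows the entire execution is a function of the leader state — the leader's broadcast is determined by its state, every non-leader node's state depends only on its own label-history and on the common leader broadcasts, and hence the multiset received by $v_l$ at the next round is determined — so a deterministic $\mathcal{A}_l$ is in the same configuration on $M$ and on $M'$ at every round $r\le\rho$ and cannot tell them apart.

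The delicate point is this last one: arguing that the canonical leader state $\mathbf{m}_r$ really records \emph{everything} the algorithm can extract, so that equal $\mathbf{m}_\rho$ forces identical executions. This is exactly where anonymity and the broadcast primitive enter — every neighbour of $v_l$ hears the same message, so non-leader nodes that share a label-history remain permanently in the same local state — and it needs a careful induction. By contrast the linear-algebra core (solutions related by $\mathbf{k}_r$, the count of its $-1$ entries, and the threshold $3^{r}\le 2n+1$) is routine once Lemmas~\ref{lemma:uniquekernel}--\ref{lemma:kernels2} are available, and Step~3's construction only has to verify the definitional constraints of ${\cal M}(\text{DBL}_2)$.
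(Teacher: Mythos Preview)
Your proof is correct and follows essentially the same route as the paper: build $\mathbf{s}_r$ by placing at least one unit on every coordinate where $\mathbf{k}_r$ is negative (feasible precisely when $n\ge\sum^{-}\mathbf{k}_r$), take $\mathbf{s}'_r=\mathbf{s}_r+\mathbf{k}_r$, and conclude from $\mathbf{M}_r\mathbf{s}_r=\mathbf{M}_r\mathbf{s}'_r$ that the leader cannot distinguish the two instances. You are in fact more explicit than the paper on two points it leaves implicit---the actual construction of $M,M'$ from the state vectors, and the induction showing that the canonical leader state $\mathbf{m}_r$ is informationally complete---but the argument is the same.
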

\begin{proof}
Let us suppose by contradiction that such algorithm $\mathcal{A}_l$ exists. For lemma \ref{lemma:kernels2} we have  $ \sum^{-} \mathbf{k}_{r} = \frac{1}{2}(3^{r+1}+1)-1 \leq n$. Let us consider a configuration of non leader processes represented by vector $\mathbf{s}_{r}$ with $\sum \mathbf{s}_{r}= n$ and such that $(\mathbf{s}_{r})_j \geq 1$ for each $j \,| \, (\mathbf{k}_{r})_j < 0$ and $(\mathbf{s}_{r})_j =0$ otherwise. 
This implies there exists a dynamic multigraph $M:\{M_1,\ldots,M_r,\ldots \}$, of size $n$, obtained from $\mathbf{s}_r$ such that the leader state $S(v_l,r)$ at round $r$ is represented by $\mathbf{m}_r =\mathbf{M}_{r}\mathbf{s}_{r}$. Thus $\mathcal{A}_l$ outputs $n$  on the state $S(v_l,r)$.

Now let us consider $\mathbf{s'}_{r}=\mathbf{k}_{r}+ \mathbf{s}_{r}$, by construction we have $\forall j | (\mathbf{s'}_{r})_j>0$ thus $\mathbf{s'}_{r}$ represents an instance of dynamic multigraph $M':\{M'_1,\ldots,M'_r,\ldots \}$, let us denote $S'(v_l,r)$ the state of $v_l$ in $M'$. Since we have $\sum \mathbf{s'}_{r}=\sum \mathbf{k}_{r}+ \sum \mathbf{s}_{r}=n+1$, let us recall that  from Lemma \ref{lemma:kernels2} we have $\sum \mathbf{k}_{r}=1$, by hypothesis $\mathcal{A}_l$ outputs $n+1$  on the state $S'(v_l,r)$.

But by definition of kernel $\mathbf{M}_{r'}\mathbf{s}_{r}=\mathbf{M}_{r}\mathbf{s'}_{r}=\mathbf{m}_{r}$, thus $S(v_l,r)=S'(v_l,r)$ therefore $\mathcal{A}_l$  has to give the same output on the two different instances, that is a contradiction.

\end{proof}

From the previous lemma and considering that ${\cal M}($DBL$_2) \subseteq  {\cal M}($DBL$_k)$, we can state the following theorem whose proof is straightforward.

\begin{theorem}\label{lemma:dblk}
Any algorithm $\cal{A}$ cannot solve the counting on an instance $M \in {\cal M}($DBL$_k)$ at round $r < \lfloor \text{log}_{3}(2|W|+1) \rfloor-1$.
\end{theorem}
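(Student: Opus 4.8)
The plan is to derive Theorem~\ref{lemma:dblk} as an almost immediate corollary of Lemma~\ref{lemma:bound}, together with the trivial inclusion ${\cal M}($DBL$_2) \subseteq {\cal M}($DBL$_k)$ for every $k \geq 2$. First I would observe that any instance $M \in {\cal M}($DBL$_2)$ is, by definition, also an instance of ${\cal M}($DBL$_k)$: the defining constraint $1 \leq |E^v(r)| \leq 2$ implies $1 \leq |E^v(r)| \leq k$, and the distinct-label requirement is inherited verbatim since labels in $\{1,2\}$ are in particular labels in $\{1,\ldots,k\}$. Hence any algorithm $\cal A$ that solves counting on all of ${\cal M}($DBL$_k)$ must in particular solve counting on all of ${\cal M}($DBL$_2)$, so a lower bound for the latter transfers to the former.

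Next I would make the contradiction explicit. Suppose, for contradiction, that some algorithm $\cal A$ solves counting on every $M \in {\cal M}($DBL$_k)$ and terminates, on some instance of size $|W|$, at a round $r < \lfloor \log_3(2|W|+1) \rfloor - 1$, equivalently $r \leq \lfloor \log_3(2|W|+1) \rfloor - 2 \leq \lfloor \log_3(2|W|+1)\rfloor$ (the weakening only makes the claim easier to contradict, and we actually only need $r \leq \lfloor \log_3(2n+1)\rfloor$ with $n=|W|$). By the inclusion just noted, $\cal A$ must also correctly count on all of ${\cal M}($DBL$_2)$ within the same round bound. Set $n = |W|$. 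Lemma~\ref{lemma:bound} produces a pair $M, M' \in {\cal M}($DBL$_2)$ with $|W| = n$ and $|W'| = n+1$ such that no algorithm can distinguish, at round $r \leq \lfloor \log_3(2n+1)\rfloor$, whether it runs on $M$ or on $M'$ — the two induce the identical leader state $S(v_l,r) = S'(v_l,r)$, because $\mathbf{s}'_r = \mathbf{s}_r + \mathbf{k}_r$ and $\mathbf{k}_r \in ker(\mathbf{M}_r)$. Since $\cal A$ is deterministic and sees only the leader state, it must output the same value on both, contradicting the requirement that it output $n$ on $M$ and $n+1$ on $M'$.

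The one bookkeeping point to check is that the round index in Lemma~\ref{lemma:bound}, namely $r \leq \lfloor \log_3(2n+1)\rfloor$, covers the range $r < \lfloor \log_3(2|W|+1)\rfloor - 1$ claimed in the theorem; since $\lfloor \log_3(2|W|+1)\rfloor - 1 - 1 < \lfloor \log_3(2|W|+1)\rfloor$, any round strictly below $\lfloor \log_3(2|W|+1)\rfloor - 1$ is a fortiori at most $\lfloor \log_3(2|W|+1)\rfloor$, so Lemma~\ref{lemma:bound} applies. One should also note that $\cal A$ running on $G \in {\cal G}($PD$)_2$ reduces, via Lemma~\ref{lemma:countingreduction}, to counting on ${\cal M}($DBL$_k)$, though for the statement of this theorem the reduction is not needed — the theorem is phrased directly about ${\cal M}($DBL$_k)$.

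There is essentially no obstacle here: the theorem is labelled ``straightforward'' in the text precisely because all the work lives in Lemmas~\ref{lemma:uniquekernel}--\ref{lemma:bound}, where the kernel of $\mathbf{M}_r$ is shown to be one-dimensional, spanned by the recursively-defined vector $\mathbf{k}_r = [\mathbf{k}_{r-1}\ \mathbf{k}_{r-1}\ -\mathbf{k}_{r-1}]^{\intercal}$, with $\sum \mathbf{k}_r = 1$ and $\sum^{-}\mathbf{k}_r = \tfrac12(3^{r+1}+1)-1$. If I were worried about anything it would be the exact placement of the $-1$ in the final exponent bound (comparing the $\lfloor \log_3(2|W|+1)\rfloor - 1$ of Lemma~\ref{lemma:bound}'s hypothesis with the $\lfloor \log_3(2|W|+1)\rfloor - 1$ versus the theorem's extra $-1$), so I would state the chain of inequalities among floors carefully, but this is purely arithmetic and not conceptually hard.
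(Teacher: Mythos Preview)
Your proposal is correct and follows exactly the paper's approach: the paper itself states the theorem as a straightforward consequence of Lemma~\ref{lemma:bound} together with the inclusion ${\cal M}(\text{DBL}_2) \subseteq {\cal M}(\text{DBL}_k)$, and you have simply written out those details. Your remark that the reference to Lemma~\ref{lemma:countingreduction} is unnecessary here is also accurate, and your bookkeeping on the floor inequalities, while slightly verbose, is correct.
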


From Theorem \ref{lemma:dblk} and Lemma \ref{lemma:countingreduction} the next theorem immediately follows. 
\begin{theorem}\label{lemma:glp}
Given an instance $G \in {\cal G}$(PD$)_2$ any counting algorithm ${\cal A}$ on $G$ requires  $\Omega( \text{log}\,|V| )$ rounds.  
\end{theorem}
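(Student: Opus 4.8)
The plan is to simply compose the two results already in place: the linear-algebraic lower bound for ${\cal M}($DBL$_{k})$ of Theorem~\ref{lemma:dblk} and the reduction of Lemma~\ref{lemma:countingreduction} that transfers lower bounds from ${\cal M}($DBL$_{k})$ to ${\cal G}($PD$)_2$. No new machinery is needed; the work is making the asymptotics line up.

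First I would fix $k=2$ and, for each $n \in \mathbb{N}^{+}$, pick an instance $M \in {\cal M}($DBL$_{2})$ with $|W| = n$. By Theorem~\ref{lemma:dblk} no message-passing algorithm can output the count on $M$ at any round $r < \lfloor \log_{3}(2n+1)\rfloor - 1$; equivalently, counting on $M$ takes more than $T_n := \lfloor \log_{3}(2n+1)\rfloor - 2$ rounds. I would then feed this $M$ and $T = T_n$ into Lemma~\ref{lemma:countingreduction}, obtaining a dynamic graph $G_n \in {\cal G}($PD$)_2$ on which every counting algorithm requires more than $T_n$ rounds. By the construction in that lemma, $G_n$ has node set $V_0 \cup V_1 \cup V_2$ with $|V_0| = 1$, $|V_1| = k = 2$, and $|V_2| = |W| = n$, so $|V| = n + 3$; its dynamic diameter is at most $4$, constant in $|V|$, as the model requires. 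Substituting $n = |V| - 3$ gives that counting on $G_n$ needs more than $\lfloor \log_{3}(2(|V|-3)+1)\rfloor - 2$ rounds, which is $\Omega(\log |V|)$, and this is exactly the claim.

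The one point that genuinely needs the strength of Lemma~\ref{lemma:countingreduction} — and the place where a naive argument could go wrong — is that a hypothetical fast algorithm on $G_n$ might try to exploit the unique identifiers assigned to $V_1$ and the knowledge that $|V_1| = k$. I would stress (quoting the lemma) that even the strictly more powerful "merged leader" $\{v_l\} \cup V_1$, equipped with those IDs and with $k$ known, cannot count faster than the lone leader of $M$, so dropping the IDs only makes $G_n$ harder. Beyond that there is no real obstacle: the remaining effort is the routine bookkeeping of carrying the additive constants (the $+3$ in the node count, and the floor and the $-1/-2$ in the round bound) through the estimate without disturbing the $\Omega(\log |V|)$ rate of growth.
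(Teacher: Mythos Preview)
Your proposal is correct and follows exactly the paper's approach: the paper simply states that the theorem follows immediately from Theorem~\ref{lemma:dblk} and Lemma~\ref{lemma:countingreduction}, and you have spelled out that composition together with the routine constant-tracking ($|V|=n+3$, the floor and the $-1/-2$) that the paper leaves implicit. Your remark about the strengthened merged leader with IDs is likewise just a restatement of what Lemma~\ref{lemma:countingreduction} already establishes.
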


From the previous Theorem we have the following corollary.

\begin{corollary} \label{coro}
Given a dynamic network with dynamic diameter $D$, where $D$ is constant w.r.t. $|V|$. We have that any counting algorithm ${\cal A}$ requires at least $D+\Omega(\text{log}\,|V|)$ rounds. \label{diameter}
\end{corollary}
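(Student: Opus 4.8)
The plan is to \emph{amplify} the $\Omega(\text{log}\,|V|)$ bound of Theorem~\ref{lemma:glp} so that the $D$ rounds required for information to reach the leader and the $\Omega(\text{log}\,|V|)$ rounds required to kill the anonymity‑induced ambiguity are forced to occur one after the other rather than in parallel. I would start from the two hard instances $G_1,G_2\in{\cal G}($PD$)_2$, of sizes $n$ and $n+1$, that are produced by the proof of Theorem~\ref{lemma:glp} (i.e.\ the translations of the multigraphs $M,M'$ of Lemma~\ref{lemma:bound}); these induce the same state at their leader up to round $t(n)=\Omega(\text{log}\,n)$, the threshold of Lemma~\ref{lemma:bound} and Theorem~\ref{lemma:dblk}. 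Call that leader $\lambda$ (it has only a constant number of first‑layer neighbours). Fix a constant $\ell$, attach to $\lambda$ a \emph{static} path $\lambda=u_0-u_1-\cdots-u_\ell$, make $\lambda$ and all gadget nodes anonymous, and declare $u_\ell$ to be the leader of the resulting dynamic network $H_i$. Note $H_i\in{\cal G}($PD$)_{\ell+2}$: $u_j$ has persistent distance $\ell-j$ from $u_\ell$, and the two gadget layers sit at persistent distances $\ell+1$ and $\ell+2$.

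First I would pin down the dynamic diameter of $H_i$. A flood started anywhere inside the gadget covers the gadget within $4$ rounds (its dynamic diameter is at most $2\cdot 2$), hence reaches $\lambda=u_0$ within $4$ rounds and $u_\ell$ within $\ell$ more; floods started on the path are faster; and the slowest flood, from $u_\ell$ down to the second layer, takes exactly $\ell+2$ rounds. Hence the dynamic diameter of $H_i$ lies between $\ell+2$ and $\ell+4$, and by choosing $\ell$ together with the base gadget one makes it equal to exactly $D$, with $\ell=D-\Theta(1)$ (this is the relevant regime, $D$ at least the constant diameter of the gadget). Since $\ell$ and the number of first‑layer nodes are constants, $|V|=n+\Theta(1)=\Theta(n)$, so $\Omega(\text{log}\,n)=\Omega(\text{log}\,|V|)$.

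The heart of the argument is that the new leader $u_\ell$ cannot distinguish $H_1$ from $H_2$ before round $\ell+t(n)$. Because the path is static, a light‑cone argument shows that the full‑information view of $u_\ell$ at round $r$ depends only on the fixed initial data of the path and on the sequence of broadcasts emitted by $\lambda$ in rounds $0,\dots,r-\ell$ — gadget information is merely delayed by $\ell$ hops. So it suffices to prove, by induction on $j\le t(n)$, that $\lambda$ broadcasts the same message in round $j$ in $H_1$ and in $H_2$. Granting this for rounds $<j$, the left part of the path evolves identically and feeds identical messages back into $\lambda$; and since the gadget is attached to the rest of $H_i$ only through $\lambda$, feeding it an identical input stream puts us back in the setting of Lemma~\ref{lemma:bound}, where the adversary can keep the multiset of states reported back to $\lambda$ by the first layer identical across $G_1$ and $G_2$ for all $j\le t(n)$; hence $\lambda$'s state, and thus its next broadcast, coincide. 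Consequently $u_\ell$'s views at round $\le\ell+t(n)$ are identical in two networks of different size, so no algorithm can output a correct count before round $\ell+t(n)+1=D+\Omega(\text{log}\,|V|)$.

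The delicate point — and the one I expect to require the real work — is exactly this inductive step: in Theorem~\ref{lemma:glp} the ambiguity is analysed with $\lambda$ acting as a leader that broadcasts only a dummy message, whereas in $H_i$ the node $\lambda$ is anonymous, runs the actual algorithm, and in addition hears back along the path. One therefore has to revisit the linear‑algebra machinery of Lemmas~\ref{lemma:uniquekernel}–\ref{lemma:kernels2} and verify that the kernel relation $\mathbf{s'}=\mathbf{s}+t\,\mathbf{k}_r$ still yields two genuine instances indistinguishable at $\lambda$ even when the common message stream entering the gadget from $\lambda$ is arbitrary rather than a dummy — intuitively a stream that is identical on both sides adds no distinguishing power, and the monotonicity observation that removing $\lambda$'s distinguished status can only make counting harder is what closes the loop; making this precise is the only non‑routine part of the proof.
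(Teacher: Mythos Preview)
Your construction --- a static path of length roughly $D$ attached to the gadget leader, with the real leader at the far end --- is exactly the paper's, whose entire proof is a two-sentence sketch of this construction followed by an appeal to Theorem~\ref{lemma:glp}. The subtlety you flag about $\lambda$ no longer being a distinguished leader is genuine and the paper simply glosses over it; your inductive resolution (identical incoming stream at $\lambda$ implies identical outgoing stream, so the kernel argument of Lemmas~\ref{lemma:uniquekernel}--\ref{lemma:bound} still applies verbatim) is correct and is precisely the detail the paper omits.
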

\begin{proof}
We create a configuration where $v_l$ is connected to two nodes $v_1,v_2$  by a static chain of $D-1$ nodes. Nodes $v_1,v_2$ are connected to the remaining ${\cal O}(|V|)$ nodes mimicking a ${\cal G}$(PD$)_2$ network. From this observation and Theorem  \ref{lemma:glp} the next corollary follows.
\end{proof}

\noindent \textbf{Discussion:} It is interesting to remark how the cost of counting is highly dependent on the information a node knows about the dynamic network (or about the adversary). Let us consider a simplified model where is not possible to have edges among nodes in the same set $V_h$, clearly this additional restriction does not impact on the validity of the lower bound and counting in ${\cal G}($PD$)_2$  still requires ${\Omega}(|V|)$ rounds. However, if we assume that a node $v$ knows its degree  $|N(v,r)|$ before the receive phase of round $r$,  as the model considered in \cite{ICDCS}, then counting in these restricted ${\cal G}($PD$)_2$ networks needs $O(1)$ rounds.  The algorithm is trivial: each node in $V_2$ sends a message $\frac{1}{|N(v,r)|}$ to nodes in $V_1$. A node in $V_1$ collects these messages and send their sum to the leader. 
This observation shows how a minimal knowledge on the dynamic graph, e.g., local degree known by a node at the beginning of round $r$, may have a great impact on the counting time.

\section{Conclusion}

This paper has shown that on anonymous dynamic networks with constant dynamic diameter there is at least a gap of $\Omega (\log\,|V|)$ rounds between counting and information dissemination even considering a model where the bandwidth is unlimited. 
 This difference is actually not present in static networks with IDs, dynamic network with IDs and anonymous static networks. In all these networks, when bandwidth is unlimited, there are algorithms showing that the cost of counting is at most the same order of the dynamic diameter \cite{1806760,sss13,MichailCS12}.
 On the contrary in anonymous networks with broadcast even when $D$ is constant  with respect to $V$, we proved that counting is still function of the network size. Solving counting in anonymous dynamic networks with broadcast is still an open problem (see the impossibility result conjectured in \cite{MichailCS12,sss13}) as well as finding a better bound than  $\Omega(D)$ for counting in anonymous 1-interval connected networks without any constraint on $D$. 

\medskip
\small
\noindent {\bf Acknowledgements:} The work has been partially supported by the Italian PRIN Project TENACE  "Protecting National Critical Infrastructures from Cyber Threats".

\bibliographystyle{plain}
\bibliography{biblio}

\begin{thebibliography}{10}

\bibitem{opodis13}
S.~Abshoff, M.~Benter, M.~Malatyali, and F.~Meyer auf~der Heide.
\newblock On two-party communication through dynamic networks.
\newblock In {\em OPODIS 2013}, pages 11--22. Springer, 2013.

\bibitem{An80}
D.~Angluin.
\newblock Local and global properties in networks of processors (extended
  abstract).
\newblock In {\em STOC '80}, pages 82--93. ACM, 1980.

\bibitem{BaldoniBCQ12}
R.~Baldoni, S.~Bonomi, A.~Cerocchi, and L.~Querzoni.
\newblock Virtual tree: A robust architecture for interval valid queries in
  dynamic distributed systems.
\newblock {\em J. Parallel Distrib. Comput.}, 73(8):1135--1145, 2013.

\bibitem{BGGM07}
M.~Bawa, A.~Gionis, H.~Garcia-Molina, and R.~Motwani.
\newblock The price of validity in dynamic networks.
\newblock {\em J. Comput. Syst. Sci.}, 73(3):245--264, May 2007.

\bibitem{surveysantoro}
A.~Casteigts, P.~Flocchini, W.~Quattrociocchi, and N.~Santoro.
\newblock Time-varying graphs and dynamic networks.
\newblock {\em CoRR}, abs/1012.0009, 2010.

\bibitem{soda13}
C.~Dutta, G.~Pandurangan, R.~Rajaraman, Z.~Sun, and E.~Viola.
\newblock On the complexity of information spreading in dynamic networks.
\newblock In {\em SODA '13}, pages 717--736. SIAM, 2013.

\bibitem{disc12}
B.~Haeupler and F.~Kuhn.
\newblock Lower bounds on information dissemination in dynamic networks.
\newblock In {\em DISC '12}, pages 166--180. Springer, 2012.

\bibitem{kempe}
D.~Kempe, A.~Dobra, and J.~Gehrke.
\newblock Gossip-based computation of aggregate information.
\newblock In {\em FOCS '03}, pages 482--491. IEEE, 2003.

\bibitem{1806760}
F.~Kuhn, N.~Lynch, and R.~Oshman.
\newblock Distributed computation in dynamic networks.
\newblock In {\em STOC '10}, pages 513--522. ACM, 2010.

\bibitem{directeddataaggregation}
F.~Kuhn and R.~Oshman.
\newblock The complexity of data aggregation in directed networks.
\newblock In {\em DISC' 11}, pages 416--431, 2011.

\bibitem{linearalgebra}
S.~Lang.
\newblock {\em Linear algebra}.
\newblock Addison-Wesley Series in Mathematics. Addison-Wesley Publishing
  Company, 1966.

\bibitem{DBBC14}
G.~Di Luna, R.~Baldoni, S.~Bonomi, and I.~Chatzigiannakis.
\newblock Conscious and unconscious counting on anonymous dynamic networks.
\newblock In {\em ICDCN '14}, pages 257--271. Springer, 2014.

\bibitem{ICDCS}
G.~Di Luna, R.~Baldoni, S.~Bonomi, and I.~Chatzigiannakis.
\newblock Counting in anonymous dynamic networks under worst case adversary.
\newblock In {\em ICDCS '14}, pages 338--347. IEEE, 2014.

\bibitem{LKM06}
E.~Le Merrer, A.-M. Kermarrec, and L.~Massoulie.
\newblock Peer to peer size estimation in large and dynamic networks: A
  comparative study.
\newblock In {\em HPDC '06}, pages 7--17. IEEE, 2006.

\bibitem{MichailCS12}
O.~Michail, I.~Chatzigiannakis, and P.~Spirakis.
\newblock Brief announcement: Naming and counting in anonymous unknown dynamic
  networks.
\newblock In {\em DISC '12}, pages 437--438. Springer, 2012.

\bibitem{sss13}
O.~Michail, I.~Chatzigiannakis, and P.~Spirakis.
\newblock Naming and counting in anonymous unknown dynamic networks.
\newblock In {\em SSS '13}, pages 281--295. Springer, 2013.

\bibitem{dell2007}
R.~O'Dell and R.~Wattenhofer.
\newblock Information dissemination in highly dynamic graphs.
\newblock In {\em DIALM-POMC '05}, pages 104--110. ACM, 2005.

\bibitem{Yamashita:1988:CAN:62546.62568}
M.~Yamashita and T.~Kameda.
\newblock Computing on an anonymous network.
\newblock In {\em PODC '88}, pages 117--130. ACM, 1988.

\end{thebibliography}

\end{document}